\newcommand{\blue}[1]{{\color{blue}{#1}}} 
\newcommand{\red}[1]{{\color{red}{#1}}} 
\algnewcommand{\LineComment}[1]{\State \(\triangleright\) #1}
\def\BState{\State\hskip-\ALG@thistlm}
\theoremstyle{remark}
\newtheorem{remark}{Remark}
\let\NAT@parse\undefined
\definecolor{darkgreen}{rgb}{0,0.75,0}
\newcommand{\abs}[1]{\left|{#1}\right|}
\newcommand{\expec}[1]{{\mathbb{E}\!\left[{#1}\right]}}
\newcommand{\prob}[1]{{\mathbb{P}\!\left[{#1}\right]}}
\newcommand{\nth}[1]{{#1}{\text{th}}}
\newcommand{\covprobCC}[0]{\bar{\mathcal{P}}_{\mathrm{cov},\mathrm{C}}}
\newcommand{\covprobCCn}[0]{{\mathcal{P}}_{\mathrm{cov},\mathrm{C}}}
\newcommand{\covprobCn}[1]{\covprobCCn\!\left({#1}\right)}
\newcommand{\covprobC}[1]{\covprobCC\!\left({#1}\right)}
\newcommand{\covprobDD}[0]{\bar{\mathcal{P}}_{\mathrm{cov},\mathrm{D}}}
\newcommand{\covprobD}[1]{\covprobDD\!\left({#1}\right)}
\newcommand{\covprobDDn}[0]{{\mathcal{P}}_{\mathrm{cov},\mathrm{D}}}
\newcommand{\covprobDn}[1]{\covprobDDn\!\left({#1}\right)}
\newcommand{\txprob}{\mathcal{P}_{\mathrm{tx}}}
\newcommand{\RC}{R_{\mathrm{C}}}
\newcommand{\RD}{R_{\mathrm{D}}}
\theoremstyle{definition}
\newtheorem{theorem}{Theorem}
\newtheorem{corollary}{Corollary}
\newtheorem{lemma}{Lemma}
\newtheorem{proposition}{\textit{Proposition}}
\begin{document}

\title{Power Control and Channel Allocation for D2D Underlaid Cellular Networks}

\author{Asmaa~Abdallah,~\IEEEmembership{Student Member,~IEEE,}
        Mohammad~M.~Mansour,~\IEEEmembership{Senior Member,~IEEE,}
        and~Ali~Chehab,~\IEEEmembership{Senior Member,~IEEE}
\thanks{The authors are with the Department
of Electrical and Computer Engineering, American University of Beirut, Lebanon. E-mail: \texttt{\{awa18,mmansour,chehab\}@aub.edu.lb.}}
}

\maketitle
\vspace{-0.6in}
%
\begin{abstract}\vspace{-0.1in}
Device-to-Device (D2D) communications underlaying cellular networks is a viable network technology that can potentially increase spectral utilization and improve power efficiency for proximity-based wireless applications and services. However, a major challenge in such deployment scenarios is the interference caused by D2D links when sharing the same resources with cellular users. In this work, we propose a channel allocation (CA) scheme together with a set of three power control (PC) schemes to mitigate interference in a D2D underlaid cellular system modeled as a random network using the mathematical tool of stochastic geometry. The novel aspect of the proposed CA scheme is that it enables D2D links to share resources with \emph{multiple} cellular users as opposed to one as previously considered in the literature. Moreover, the accompanying distributed PC schemes further manage interference during link establishment and maintenance. The first two PC schemes compensate for large-scale path-loss effects and maximize the D2D sum rate by employing distance-dependent path-loss parameters of the D2D link and the base station, including an error estimation margin. The third scheme is an adaptive PC scheme based on a variable target signal-to-interference-plus-noise ratio, which limits the interference caused by D2D users and provides sufficient coverage probability for cellular users. Closed-form expressions for the coverage probability of cellular links, D2D links, and sum rate of D2D links are derived in terms of the allocated power, density of D2D links, and path-loss exponent. The impact of these key system parameters on network performance is analyzed and compared with previous work. Simulation results demonstrate an enhancement in cellular and D2D coverage probabilities, and an increase in spectral and power efficiency.
\end{abstract}

\begin{IEEEkeywords}
Device-to-device communications, Poisson point process, power control, resource allocation, stochastic geometry.
\end{IEEEkeywords}

%
\section{Introduction}\label{sec:Power Control Types}
The main motivation behind using Device-to-Device (D2D) communication underlaying cellular systems is to enable communication between devices in close vicinity with low latency and low energy consumption, and potentially to offload a telecommunication network from handling local traffic \cite{doppler2009device,corson2010toward, fodor2012design,3gpp803,lin2014overview}. D2D is a promising approach to support proximity-based services such as social networking and file sharing~\cite{3gpp803}. When the devices are in close vicinity, D2D communication improves the spectral and energy efficiency of cellular networks~\cite{lin2014overview}.

Despite the benefits of D2D communications in underlay mode, interference management and energy efficiency have become fundamental requirements \cite{mach2015band} in keeping the interference caused by the D2D users under control, while simultaneously extending the battery lifetime of the User Equipment (UE). For instance, cellular links experience cross-tier interference from D2D transmissions, whereas D2D links not only deal with the inter-D2D interference, but also with cross-tier interference from cellular transmissions. Therefore, power control (PC) and channel allocation (CA) have become necessary for managing interference levels, protecting the cellular UEs (CUEs), and providing energy-efficient communications.

Power control and channel allocation schemes have been presented in the literature as strategies to mitigate interference in wireless networks \cite{fodor2013comparative,yates1997soft,de2015power,de2014power,de2015uplink,ali2015mode,lee2015power,memmi2016power,banagar2016power,islam2016two,Resource_VTC,liu2016optimizing,liu2016optimizing2,huang2015mode, wang2016hypergraph,tang2016mixed,wang2016pairing}. In \cite{fodor2013comparative}, open loop and closed loop PC schemes (OLPC, CLPC), used in LTE \cite{3gpp213}, are compared with an optimization based approach aimed at increasing spectrum usage efficiency and reducing total power consumption. However, such schemes require a large number of iterations to converge.

In~\cite{yates1997soft,de2015power,de2014power,de2015uplink}, a power allocation scheme is presented based on a ``soft dropping'' PC algorithm, in which the transmit power meets a variable target signal-to-interference-plus-noise ratio (SINR). However, the system considered is not random, and the D2D users in \cite{de2015power,de2014power,de2015uplink} are confined within a hotspot in a cellular region.

In \cite{ali2015mode}, a D2D ``mode'' is selected in a device based on its proximity to other devices and to its distance to the eNB. However, the inaccuracy of distance derivation is a key aspect that is not addressed in~\cite{ali2015mode}. In \cite{islam2016two}, a two-phase auction-based algorithm is used to share uplink spectrum. The authors assume that all the channel information is calculated at the eNB and broadcasted to users in a timely manner, which will cause an excessive signaling overhead. In \cite{Resource_VTC}, a heuristic delay-tolerant resource allocation is presented for D2D underlying cellular networks; however, power control is ignored since D2D users always transmit at maximum power.

In the above schemes, power control and channel allocation methods \cite{fodor2013comparative,yates1997soft,de2015power,de2014power,de2015uplink,islam2016two,Resource_VTC,tang2016mixed,wang2016pairing} are developed and evaluated assuming deterministic D2D link deployment scenarios. On the other hand, PC in \cite{lee2015power} is presented for unicast D2D communications by modeling a random network for a D2D underlaid cellular system,  using stochastic geometry. D2D users are distributed using a (2-dimensional) spatial Poison point process (PPP) with density $\lambda$. Stochastic geometry is a useful mathematical tool to model irregular spatial structures of D2D locations, and to quantify analytically the interference in D2D underlaid cellular networks using the Laplace transform~\cite{schilcher2016interference,stoyan1995stochastic,baccelli2009stochastic}. Two PC schemes are developed in~\cite{lee2015power}; a centralized PC and a simple distributed on-off PC scheme. The former requires global channel state information (CSI) possibly at a centralized controller, which may incur high CSI feedback overhead, whereas the latter is based on a decision set and requires only direct link information. However, the authors assume a fixed distance between the D2D pairs, and that the D2D devices for the distributed case operate at maximum power leading to severe co-channel interference. Moreover, the distributed PC scheme of~\cite{lee2015power} does not guarantee reliable cellular links, especially at high SINR targets. In~\cite{memmi2016power}, similar PC algorithms to~\cite{lee2015power} are presented but with channel uncertainty considered; the results in~\cite{lee2015power} are regarded as ideal best-case scenarios with perfect channel knowledge. In~\cite{liu2016optimizing,liu2016optimizing2}, a framework based on stochastic geometry to analyze the coverage probability and average rate with different channel allocations in a D2D overlaid cellular systems is presented.

In~\cite{huang2015mode}, PC and resource allocation schemes are considered; however, the interference between D2D pairs is ignored.  In~\cite{wang2016hypergraph}, a transmission cost minimization problem using hypergraph model is investigated based on a content encoding strategy to download a new content item or repair a lost content item in D2D-based distributed storage systems. Moreover,~\cite{wang2016hypergraph} considers the one-to-one matching case, in which only one D2D link shares resources with only one uplink cellular user. In~\cite{tang2016mixed,wang2016pairing}, resource allocation is considered where one D2D link shares resources with only one cellular user in the underlay case. Obviously, these schemes in~\cite{huang2015mode, wang2016hypergraph,tang2016mixed,wang2016pairing} are not spectrally efficient because D2D pairs are restricted to use different resource allocations. In~\cite{zhang2012random,jindal2008fractional}, power control is studied in random ad hoc networks without taking into consideration the underlaid cellular network.

In this paper, we propose power control methods along with channel allocation and analyze their performance assuming a random D2D underlaid cellular network model. A main shortcoming in most papers in the literature is that unrealistic assumptions are considered. For instance, in~\cite{lee2015power,memmi2016power} the authors rely on deterministic values such as fixed distance between the D2D transmitter and receiver, fixed transmission power, and fixed SINR targets and they only consider one cellular user sharing the resources with the D2D links. These deterministic assumptions simplify the derivation of the analytical models, but are in many cases unrealistic. In our work, we study a general scenario by randomly modeling the distance between the D2D pairs, assigning different transmission power to D2D links, varying the SINR targets, and consider multiple cellular users sharing the resources with the D2D links. Therefore, the presented analytical expressions in this paper give more insight into the performance of a D2D underlaid cellular system in a rather more realistic approach.

%
\emph{Contributions}: The main contributions of this work are the following:
\begin{enumerate}
\item A new channel allocation scheme is proposed based on how far the D2D users are from the cellular users. It enables D2D links to share resources with multiple cellular users as opposed to one as previously considered in the literature. It also decreases the density of active D2D users sharing the same resources, thus the interference generated by the D2D users is decreased, which in turn enhances the cellular as well as the D2D coverage probabilities.

\item Analytical expressions for the coverage probability for cellular and D2D links are derived taking into account \emph{varying} distances between the pairs of devices, in contrast to \cite{lee2015power,memmi2016power}. Therefore, the random variables that model distances and allocated power will significantly add to the complexity of the equations derived in \cite{lee2015power,memmi2016power}; however, the randomness of the D2D underlaid system is efficiently captured, and accurate insights of the performance aspects of the D2D system are provided.

\item Two distributed power control algorithms are proposed for \emph{link establishment}. One scheme maximizes the sum rate of the D2D links, while the other minimizes the interference level at the eNB. Both schemes depend on a distance-based path-loss parameter between the D2D transmitter, D2D receiver and the eNB. In addition, the inaccuracy of distance estimation is handled by incorporating an estimation error margin. A closed-form expression of various moments of the power allocated to the D2D links is derived. Moreover, an analytical expression of the sum-rate of D2D links is derived to determine the optimal D2D transmission probability that maximizes this sum rate.

\item A distributed adaptive power control scheme (soft dropping distance-based PC) is proposed for \emph{link maintenance}.  This PC scheme adapts to channel changes in a more realistic manner. Furthermore, this dynamic approach maintains the link quality over time by softly dropping the target SINR as the distance between the D2D pairs changes, and thus the power transmitted is adjusted to meet this variable SINR. Hence, this scheme limits the interference caused by the D2D users while varying the target SINR for the D2D links.
\end{enumerate}

The rest of the paper is organized as follows. The system model for a D2D underlaid cellular network is described in Section~\ref{s:system_model}. In Section~\ref{s:Binary_CA}, the proposed channel allocation is introduced. In Section~\ref{s:coverage_prob}, analytical expressions for the coverage probabilities are derived. In Section~\ref{s:pc_schemes}, the proposed PC schemes are presented. Case studies with numerical results are simulated and analyzed based on the proposed schemes in Section~\ref{sec:simulation}. Section~\ref{sec:conclusion} concludes the paper.

\begin{figure}[t]
    \centering
    \includegraphics[width=0.5\textwidth]{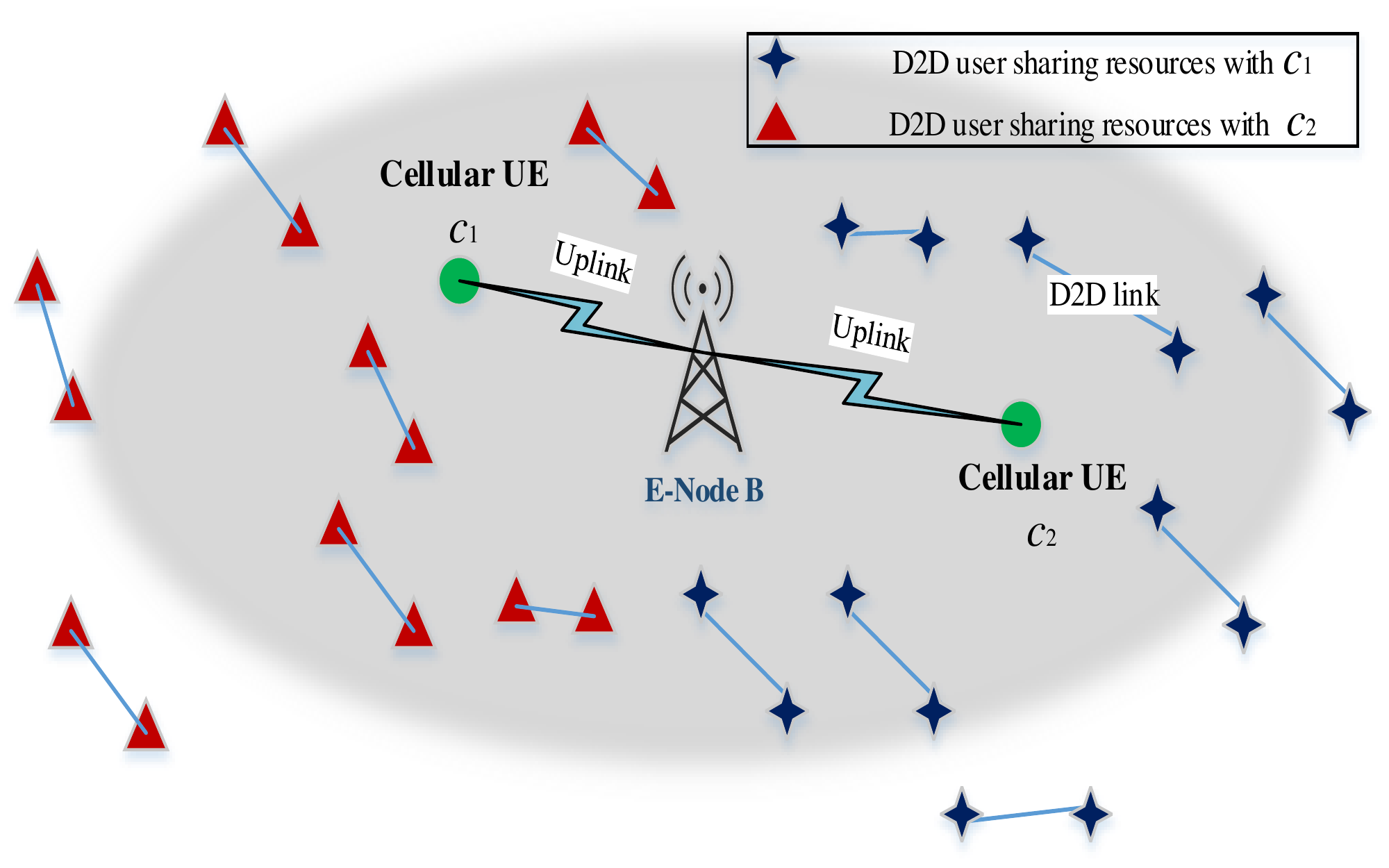}\vspace{-0.25in}
    \caption{A single-cell D2D underlaid cellular network. Two cellular users $c_1$ and $c_2$ establish a link with the eNB while several active D2D links are established in a disk centered at the eNB with radius $\RC$. For the case $m=2$, a subset of active D2D links share resources with cellular UE $c_1$ (\blue{$\bigstar$}), while other D2D links share resources with $c_2$ (\red{$\blacktriangle$}). }
\label{fig:sys_model1}
\end{figure}

\begin{figure}[t]
    \centering
    \includegraphics[width=0.5\textwidth]{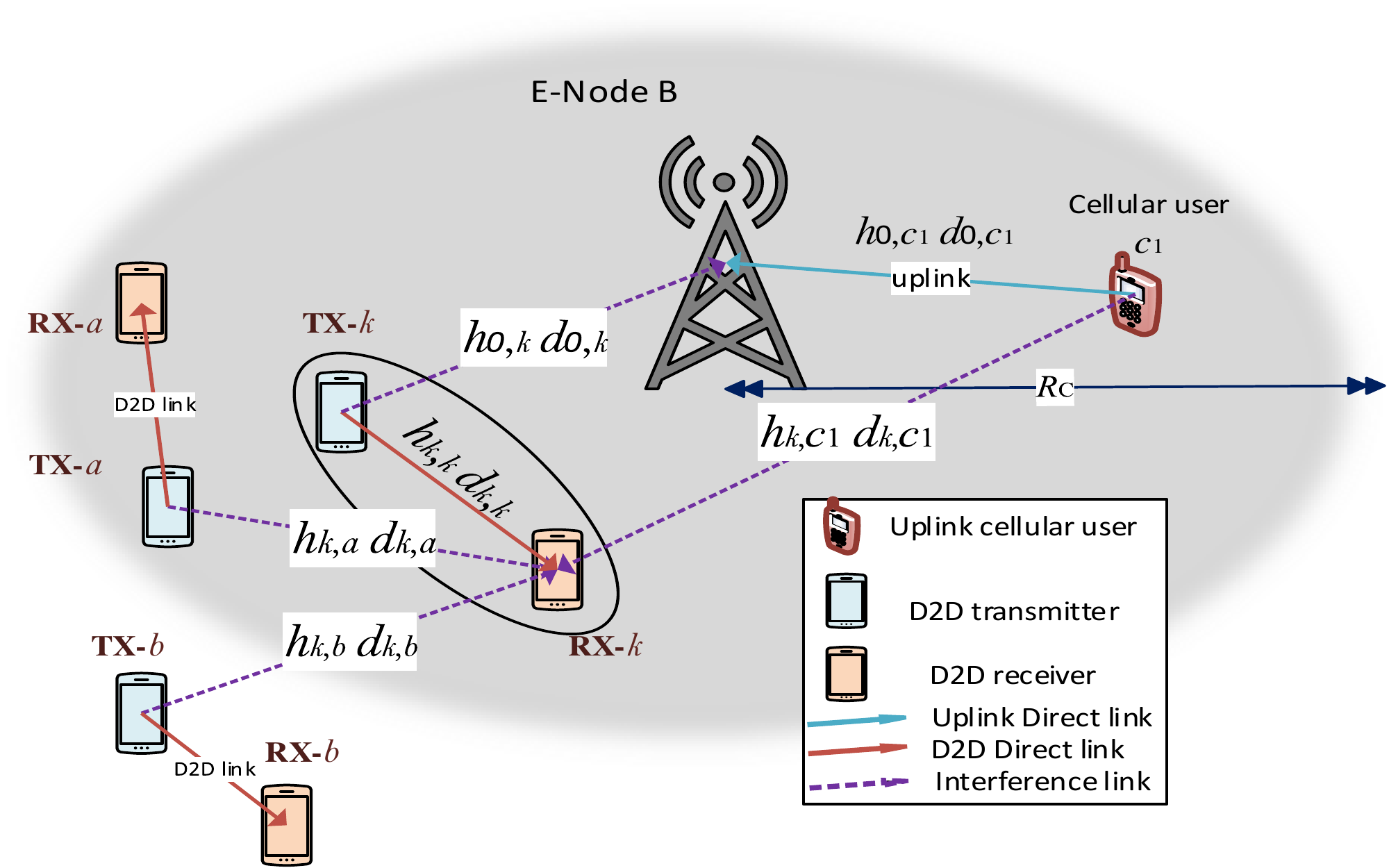}\vspace{-0.25in}
    \caption{The system model shows the channel model for one of the cellular users and a subset of active D2D links that share resources with $c_1$. The active D2D links outside the cell are considered as out-of-cell D2D interference, whereas out-of-cell interference from cellular users belonging to cross-tier cells is ignored. }
\label{fig:sys_model}
\end{figure}

%
\section{System Model}\label{s:system_model}
In this section, the system model and the corresponding network parameters are presented. As shown in Fig.~\ref{fig:sys_model1}, we study a D2D underlaid cellular network in which a pool of $K$ active D2D users is divided into $M$ groups such that each group shares distinct resources with one of $M$ cellular users, as opposed to the assumption taken in \cite{lee2015power,memmi2016power} where all the $K$ D2D users share the same resource with \textit{one} cellular user. The eNB coverage region is modeled as a circular disk $\mathcal{C}$ with radius $\RC$ and centered at the eNB. We assume that two cellular users are uniformly distributed in this disk, while the D2D transmitters are distributed in the whole $\mathsf{R}^2$ plane by the homogeneous PPP $\Phi$ with density $\lambda$, where $\prob{\Phi=n} = \exp\left(-\lambda\right)\tfrac{\lambda^n}{n!}$. The PPP assumption corresponds to having the expected number of nodes per unit area equal to $\lambda$, and the nodes being uniformly distributed in the area of interest. Hence, the number of D2D transmitters in $\mathcal{C}$ is a Poisson random variable $K$ with mean $\expec{K}=\lambda\pi R_{\mathrm{C}}^2$. In addition, the associated D2D receiver  is uniformly distributed in a disk centered at its transmitter with radius $R_{\mathrm{D}}$.

We consider a particular realization of the PPP $\Phi$ and a transmission time interval (TTI) $t$ to describe the system model. In the following, we use subscript 0 to refer to the uplink signal received by the eNB, $c_m$ to refer to the $\nth{m}$ transmitting cellular user, and $k\neq 0$ to refer to the $\nth{k}$ D2D user. Denote by $s_{0,c_m}^{(t)}$ the signal transmitted by the $\nth{m}$ cellular user in the uplink, and by $s_{k,k}^{(t)}$ the signal transmitted by the $\nth{k}$ D2D \emph{transmitter} to its $\nth{k}$ D2D \emph{receiver}, during the TTI $t$. We assume distance-independent Rayleigh fading channel models between the eNB and the UEs, between the eNB and the D2D users, and between the D2D users themselves.
Let $h_{0,c_m}^{(t)}$ denote the uplink channel gain between the $\nth{m}$ cellular user and eNB, $h_{k,k}^{(t)}$ the direct link channel gain between the $\nth{k}$ D2D \emph{transmitter} (TX) and corresponding $\nth{k}$  D2D \emph{receiver} (RX), $h_{0,k}^{(t)}$ the channel gain of the interfering link from the $\nth{k}$  D2D TX to the eNB, $h_{k,c_m}^{(t)}$ the channel gain of the interfering link from the $\nth{m}$ cellular UE to the $\nth{k}$ D2D RX, and $h_{k,l}^{(t)}$ the lateral channel gain of the interfering link from the $\nth{l}$ D2D TX to the $\nth{k}$ D2D RX. Random variables $n_0$ and $n_k$ denote additive noise at the eNB and the $\nth{k}$ D2D RX, and are distributed as $C\mathcal{N}(0,\sigma^2)$, where $\sigma^2$ is the noise variance. We also assume a distance-dependent path-loss model, i.e., a factor of the form $d_{k,l}^{-\alpha}$ that modulates the channel gains, where $d_{k,l}$ represents the distance between the $\nth{l}$ TX and the $\nth{k}$ RX, with $\alpha$ being the path-loss exponent.

Moreover, we assume that each cellular user and a subset $K' < K$  of the D2D transmitters share the same uplink physical resource block (PRB) during the same TTI ($t$) as depicted in Fig.~\ref{fig:sys_model}. Furthermore, we assume that the channel coherence bandwidth is larger than the bandwidth of a PRB, leading to a flat fading channel over each PRB.  Therefore, the received signals $y_{k,k}^{(t)}$ at the $\nth{k}$ D2D \emph{receiver}, and $y_{0,c_m}^{(t)}$ at the eNB can be expressed as
\begin{IEEEeqnarray}{rClrCl}
    y_{k,k}^{(t)} &=& h_{k,k}^{(t)}{d^{(t)}_{k,k}}^{-\alpha/2}s_{k,k}^{(t)}
                + h_{k,c_m}^{(t)}{d^{(t)}_{k,c_m}}^{-\alpha/2} s_{0,c_m}^{(t)}
                + \sum_{l=1,l\neq k}^{K'}h_{k,l}^{(t)}{d^{(t)}_{k,l}}^{-\alpha/2} s_{k,l}^{(t)}+n_k^{(t)},\label{equ1}\\
    y_{0,c_m}^{(t)}&=&h_{0,c_m}^{(t)}{d^{(t)}_{0,c_m}}^{-\alpha/2} s_{0,c_m}^{(t)}+\sum_{k=1}^{K'}h_{0,k}^{(t)}{d^{(t)}_{0,k}}^{-\alpha/2} s_{k,k}^{(t)}+ n_0^{(t)}.\label{equ2}
\end{IEEEeqnarray}

The transmit powers $p_0$ and $p_k$ are conditioned to meet certain peak power constraints, i.e. $p_0 \triangleq \abs{s_{0,c_m}}^2 \leq P_{\max,\mathrm{C}}$ and $p_k \triangleq |s_{k,k}|^2 \leq P_{\max,\mathrm{D}}$ for all links. The channel gains are estimated at each D2D receiver using the reference signal received power (RSRP), and are fed back to the corresponding D2D transmitter. In addition, it is worth noting that $\expec{K}$ represents the average number of D2D links (or transmitters) before channel allocation, whereas $\expec{K'}$ represents the number of D2D links (or transmitters)  sharing resources with $c_m$.

The SINR of any typical link is defined as $\text{SINR}\triangleq\dfrac{W}{I+N}$, where $W$ represents the power of the intended transmitted signal, $I$ represents the power of the interfering signals, and $N$ denotes the noise power. Therefore, the SINR at the eNB and D2D receiver $k$ can be written as
\begin{align}
    \text{SINR}_0(K',\textbf{p}) &=
        \dfrac{p_0 |h_{0,c_m}|^2d_{0,c_m}^{-\alpha} }{\sum_{k=1}^{K'} p_k|h_{0,k}|^2d_{0,k}^{-\alpha} \: +\: \sigma^2 },\label{equ4}\\
    \text{SINR}_k(K',\textbf{p}) &=
        \dfrac{p_k|h_{k,k}|^2d_{k,k}^{-\alpha} }{\sum_{i\neq 0,k}^{K'}p_i|h_{k,i}|^2d_{k,i}^{-\alpha} \: +\:  p_0|h_{k,c_m}|^2d_{k,c_m}^{-\alpha} \:  +\: \sigma^2 }, ~~k>0\label{equ:SINR_k}
\end{align}
where $\textbf{p}=[p_0,p_1,\cdots,p_k]^T$ represents the transmit power profile vector, with $p_i$ being the transmit power of the $\nth{i}$ UE transmitter, and $K'$ is the number of D2D transmitters. The super-subscript $(t)$ is suppressed for simplicity.

The proposed system model ignores the out-of-cell interference transmission from other uplink users from cross-tier cells. However, the density of the D2D links is a network parameter that captures the expected interference on cellular and D2D links. Moreover, when the density of the D2D links is high, the proposed system is able to capture the effect of the dominant interferer for both cellular (uplink) and D2D links, since there is a high probability that the nearest D2D interferer would become the dominant interference of a D2D link and that of the cellular link. Furthermore, when this network parameter is high, it can provide an upper bound on the performance of a D2D underlaid cellular network with out-of-cell interference. In addition, one can note that the radius of the disk $\RC$ is large enough to encompass all the D2D pairs, since the dominant interference is generated from the nearest D2D interferers.

Based on the above defined SINRs, we use the coverage probability and achievable sum rate as metrics to evaluate system performance. Precisely, the proposed CA and PC algorithms aim to maximize those quantities while maintaining a minimum level of Quality-of-Service (SINR threshold $\beta$). The coverage probabilities of both the cellular link and D2D links are derived in this work. The cellular coverage probability $\covprobC{\beta_0}$ is defined as
\begin{equation}\label{eq:cov_probC_def}
\covprobC{\beta_0}=\expec{\covprobCn{\textbf{p},\beta_0}}=\expec{\mathbb{P}(\text{SINR}_0 (K',\textbf{p})\geq\beta_0)},
\end{equation}
where $\beta_0$ denotes the minimum SINR value for reliable uplink connection. Similarly, the D2D coverage probability $\covprobD{\beta_k}$ is defined as
\begin{equation}\label{eq:cov_probD_def}
\covprobD{\beta_k}=\expec{\covprobDn{\textbf{p,}\beta_k}}=\expec{\mathbb{P}(\text{SINR}_k(K',\textbf{p}) \geq\beta_k)},
\end{equation}
where $\beta_k$ denotes the minimum SINR value for a reliable D2D link connection. In addition, the ergodic sum rate of D2D links is defined as
\begin{equation}\label{equ:rate}
\mathcal{R}_s^{(D)} =\mathbb{E} \left[\sum_{k=1}^{K'} \text{log}_2\left (1 + \text{SINR}_k(K',\textbf{p}) \right)\right].
\end{equation}
The main system parameters are summarized in Table~\ref{table:parameters}.
\begin{table}[t!]
\caption{System Parameters}
\centering
\label{table:parameters}
\begin{tabular}{l|l}\hline
    Cell radius & $R_{\mathrm{C}}$\\
    PPP of all D2D users in the cell & $\Phi$\\
    PPP of all D2D users in the cell after channel allocation & $\Phi'$\\
    Density of D2D links (D2D$/m^2$) & $\lambda$\\
    Channel gain from the cellular UE $c_m$  to eNB & $h_{0,c_m}$ \\
    Channel gain from D2D TX $k$ to D2D RX $k$ & $h_{k,k}$ \\
    Channel gain from D2D TX $k$ to eNB & $h_{0,k}$ \\
    Channel gain from the cellular UE $c_m$ to D2D RX $k$ & $h_{k,c_m}$ \\
    Channel gain from D2D TX $l$ to D2D RX $k$ & $h_{k,l}$ \\
    Distribution of channel fading ($h_{x,y}$) & Rayleigh fading $|h_{x,y}|^2\sim \exp{(1)}$\\
    Distance between D2D links ($d_{k,k}) $& Uniformly distributed \\
    Distance between uplink user and eNB ($d_{0,c_m}$) & Uniformly distributed \\
    Distance between D2D TX $k$ and eNB ($d_{0,k}$) & Uniformly distributed \\
    Expectation of an event  & $\expec{\cdot}$ \\
    Probability of an event & $\prob{\cdot}$ \\
    Laplace transform of a variable $X$ & $\mathcal{L}_X$ \\
    Coverage probability of link $L$ & ${\mathcal{P}}_{\text{cov},\mathrm{L}}$ \\
    Transmit probability &  $\txprob$\\
    Ergodic sum rate of D2D links & $\mathcal{R}_s^{(D)}$\\
    Maximum transmit power for cellular user & $P_{\max,\mathrm{C}}$\\
    Maximum transmit power for D2D user & $P_{\max,\mathrm{D}}$\\
    Receiver sensitivity (dBm) & $\rho_\mathrm{rx}$\\
    Cumulative distribution function (cdf) of variable $X$ & $F_X(\cdot)$\\
    Probability density function (pdf) of variable $X$ & $f_X(\cdot)$\\\hline
\end{tabular}
\end{table}

%
\section{Proposed Channel Allocation Scheme}\label{s:Binary_CA}
In this section, we propose a channel allocation scheme that enables active D2D users to share the same resource blocks used by $M\!>\!1$ (two or more) CUEs. Its main objective is to decrease the density of  D2D users sharing the same resource with a particular cellular user by dividing all active D2D pairs into $M$ groups, such that each group shares resources with one of $M$ distinct CUEs. By this, we further extend the system model in \cite{lee2015power,memmi2016power}, where only the case $M\!=\!1$ is considered and all the active D2D pairs are assumed to share resources with only one CUE. However, we include in our study a new resource allocation scheme for $M>1$.

Initially, mode selection determines whether a D2D pair can transmit in D2D mode or in cellular mode, and time and/or frequency resources are allocated accordingly. For simplicity, we study the case of two CUEs ($M=2$); the same approach is also generalized for any $M$. Using the independent thinning property \cite{andrews2016primer}, we independently assign random binary marks $\{1,2\}$ to the subset of active D2D users that can share resources with cellular users $c_1$ and $c_2$, respectively. The assignment is based on the following criterion: when the distance between the cellular UE $c_1$ and the $\nth{k}$ D2D RX is greater than the distance between cellular $c_2$ and the $\nth{k}$ D2D RX ($d_{k,c_1} > d_{k,c_2}$), the $\nth{k}$ D2D TX at instant $(t)$ will be assigned the value $\{ 1\}$; otherwise the D2D TX will be assigned $\{2\}$. Consequently, all D2D users assigned with value $\{1\}$ will share the same resources as $c_1$, while the rest will share the same resources as $c_2$. Therefore, sharing resources with the farthest cellular user reduces the interference at the eNB by decreasing the density of the D2D TXs sharing the same resources, and reduces the interference generated from the cellular user at the D2D RXs.

\begin{remark} Independent thinning of a PPP alters the density of the point process. If we independently assign random binary marks $\{1,2\}$ with $\mathbb{P}[Q_k\!=\!1]\!=\!q$ and $\mathbb{P}[Q_k\!=\!2]\!=\!1\!-\!q$ to each point in a PPP and collect all the points which are marked as ${1}$, the new point process will be a PPP $\Phi'$ but with density $q\lambda$, while the remaining points marked as ${2}$ will have a PPP $\Phi''$ with density $(1-q)\lambda$. In our case, the arrival of D2D users such that $d_{k,c_1}\!>\!d_{k,c_2}$ is independent of the arrival of another pair of D2D users such that $d_{k,c_2}\!>\!d_{k,c_1}$. Hence the thinning property applies.
\end{remark}

\begin{lemma}\label{lemma1} Using the above remark, half of the active D2D users will share the same resources with one of the cellular users and the other half will share them with the other cellular user.
\end{lemma}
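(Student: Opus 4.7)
The plan is to identify the thinning probability $q$ appearing in the Remark and then invoke it directly. By the assignment rule given just before the Lemma, the $\nth{k}$ D2D transmitter receives mark $\{1\}$ exactly when $d_{k,c_1}>d_{k,c_2}$, so I need $q=\mathbb{P}[d_{k,c_1}>d_{k,c_2}]$. Conditioning on the position $x\in\mathsf{R}^2$ of the associated D2D receiver, the distances $d_{k,c_1}=\|x-c_1\|$ and $d_{k,c_2}=\|x-c_2\|$ are i.i.d. because $c_1$ and $c_2$ are themselves i.i.d.\ uniformly distributed in $\mathcal{C}$. Exchangeability then yields $\mathbb{P}[d_{k,c_1}>d_{k,c_2}]=\mathbb{P}[d_{k,c_2}>d_{k,c_1}]$, and the tie set $\{d_{k,c_1}=d_{k,c_2}\}$ is the perpendicular bisector of the segment $c_1c_2$, hence has two-dimensional Lebesgue measure zero. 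Consequently $q=1/2$ pointwise in $x$, and therefore also unconditionally.

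Applying the Remark with $q=1/2$ shows that the sub-process $\Phi'$ of D2D transmitters sharing resources with $c_1$ has density $\lambda/2$ and that the complementary sub-process $\Phi''$ assigned to $c_2$ likewise has density $\lambda/2$. The expected number of D2D transmitters in $\mathcal{C}$ in each group therefore equals $\tfrac{1}{2}\lambda\pi\RC^2=\tfrac{1}{2}\expec{K}$, which is precisely the statement of the Lemma.

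The main subtlety I expect to flag is that the marks $\{Q_k\}$ at different D2D receivers are not mutually independent in the strict sense required by the classical thinning theorem, since they all depend on the common random pair $(c_1,c_2)$. For the expected-count claim, however, this is immaterial: the half-split follows immediately from Campbell's formula combined with the symmetry just established, namely
\begin{equation*}
\expec{K'_1}=\lambda\!\int_{\mathcal{C}}\mathbb{P}\bigl[\|x-c_1\|>\|x-c_2\|\bigr]\,dx=\frac{\lambda\pi\RC^2}{2},
\end{equation*}
and symmetrically for $\expec{K'_2}$. Ties are discarded because they occur on a set of probability zero, and so play no role in the final balance.
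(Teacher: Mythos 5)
Your proof is correct, but it reaches $q=1/2$ by a genuinely different and more elementary route than the paper. The paper's Appendix writes $\mathbb{P}[d_{k,c_1}\geq d_{k,c_2}]$ as a double integral of the product of the two marginal densities $f_{d_{k,c_m}}$ from~\eqref{equdkc}, evaluates the inner integral $\mathcal{A}(x)$ in closed form, and then integrates again to obtain $1/2$. You instead condition on the D2D receiver's position, observe that $\|x-c_1\|$ and $\|x-c_2\|$ are i.i.d.\ because $c_1,c_2$ are i.i.d.\ uniform on $\mathcal{C}$, and conclude by exchangeability plus the almost-sure absence of ties --- no integration needed. Your route is actually on firmer ground: the paper's factorization $f_{X,Y}(x,y)=f_X(x)f_Y(y)$ is not literally justified, since $d_{k,c_1}$ and $d_{k,c_2}$ share the common receiver location and are therefore not unconditionally independent; conditioning first, as you do, makes the i.i.d.\ claim exact, and the symmetry argument is insensitive to this issue anyway. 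Your closing observation that the marks $Q_k$ are not mutually independent across D2D users (they all depend on the common pair $(c_1,c_2)$), together with the Campbell's-formula fallback for the expected-count statement, addresses a subtlety the paper's Remark glosses over. One small imprecision: with $x$ fixed and $(c_1,c_2)$ random, the tie event $\{\|x-c_1\|=\|x-c_2\|\}$ is a null set in the $(c_1,c_2)$ space (for fixed $c_1$, the point $c_2$ must fall on a circle centered at $x$), rather than "the perpendicular bisector of $c_1c_2$," which describes the tie set in the $x$ variable; the conclusion that ties have probability zero is unaffected.
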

\begin{proof}
The proof relies on the pdf of the distance between two uniformly distributed points, which is given by~\cite{moltchanov2012distance}:
\begin{equation}\label{equdkc}
f_{d_{k,c_m}}(r)\!=\!{2r\over R_\text{C}^{2}}\!\left(\!{2\over\pi}\cos^{-1}\!\left({r\over 2R_\text{C}} \right)\!\!-\!{r\over\pi R_\text{C}}\!\sqrt{1\!-\!{r^{2}\over 4R_\text{C}^{2}}}\right),\ \ 0\!\leq\! r\!\leq\! 2R_\text{C}.
\end{equation}
Using~\eqref{equdkc}, we have $\mathbb{P}[Q_k=1]=q={1\over 2}$. See detailed proof in Appendix~\ref{proof:lemma1}. A similar approach can be applied for a more general case of $M$ CUEs. A D2D UE shares resources with the CUE that is furthest away from it. For instance, if resources are shared with $M$ CUEs, then after $M-1$ comparisons, $\mathbb{P}[{d_{k,c_m}} \geq \max_{n\neq m} \{ {d_{k,c_n}},\cdots ,{d_{k,c_M^{}}}\}]=\mathbb{P}[Q_k=\text{value assigned to } c_m]=q_k=\tfrac{1}{M}$, where $\sum_i^M{q_i}=1$.
\end{proof}

We show in Section \ref{s:coverage_prob} that the coverage probabilities for cellular and D2D links depend on the density of the D2D users sharing the same resource. With the proposed CA scheme, the density of the D2D users is decreased by a factor of $q<1$ to be $q\lambda$. Therefore, the interference at the eNB is further reduced compared to the scenario considered in~\cite{lee2015power}, because here a smaller number of D2D users ($\mathbb{E}[K']=\tfrac{\mathbb{E}[K]}{M}=\tfrac{1}{M} \lambda \pi \RC^2$) share the same resources with the each CUE.

It should be noted that sharing resources with more than one CUE increases the coverage probability, which is intuitive as the interference caused by the D2D links is reduced. However, upon increasing $M$ (implying decreasing $K'$),  the spectral efficiency of the system will decrease according to ~(\ref{equ:rate}), and hence we would lose one of the main advantages of D2D communications that is increasing the spectral efficiency of the cellular system. Therefore, a trade-off exists between enhancing the link coverage probability and increasing the system throughput.

In addition, the complexity of the proposed channel allocation is $O(MK)$ where $K$ is the total number of the D2D links that will share resources with $(M>1)$ uplink cellular users. This is due to the fact that the base station will compute, for each D2D link, the distance $(d_{k,c_m} )$ from the $\nth{k}$ D2D receiver to the all $M$ cellular users (where $0< k\leq K$ and $1< m\leq M$). Therefore, the base station computes a total of $MK$ distance parameters $(d_{k,c_m} )$ to perform the comparisons (${d_{k,c_m}} \geq \max_{n\neq m} \{ {d_{k,c_n}},\cdots ,{d_{k,c_M^{}}}\}$) as discussed above.

%
\section{Analysis of Coverage Probability}\label{s:coverage_prob}

%
In this section, the cellular and D2D coverage probabilities are derived using the tool of stochastic geometry. In order to analyze the coverage probabilities, the transmit powers $p_k$ of the D2D transmitters are assumed to be i.i.d. with cdf $F_{p_k}(\cdot)$, $k=1,\cdots,K'$, and the transmit power $p_0$ of the uplink cellular user is independent having distribution $F_{p_0}(\cdot)$.

\subsection{Cellular Link Coverage Probability}
Based on the system model and assuming that the eNB is located at the origin, the SINR of the uplink is given by~\eqref{equ4}. We are interested in the cellular coverage probability $\covprobC{\beta_0}$, which is the probability that the SINR of cellular link  is greater than a minimum SINR $\beta_0$ for a reliable uplink connection as defined~\eqref{eq:cov_probC_def}. Using Lemma~\ref{lemma1}, we derive an analytical expression for the coverage probability of a cellular link.
\begin{proposition}\label{Cell_prob_cov} The cellular coverage probability is given by
\begin{equation}
\covprobC{\beta_0}=\mathbb{E}_X\left[ e^{-a_1 X-\theta_0 X^{2/\alpha}}\right],
\end{equation}
\end{proposition}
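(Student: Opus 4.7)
The plan is to start from the definition \eqref{eq:cov_probC_def}, condition on a convenient composite random variable built from the serving link, and reduce everything else to a Laplace transform of a Poisson shot-noise interference.

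First I would invoke the Rayleigh assumption $|h_{0,c_m}|^2\sim\exp(1)$. Conditioned on the serving distance $d_{0,c_m}$, the transmit power $p_0$, and the entire interferer configuration $I_0=\sum_{k=1}^{K'} p_k|h_{0,k}|^2 d_{0,k}^{-\alpha}$, the coverage event $\text{SINR}_0\ge \beta_0$ is equivalent to
\begin{equation*}
|h_{0,c_m}|^2 \ge \frac{\beta_0\, d_{0,c_m}^{\alpha}}{p_0}\bigl(I_0+\sigma^2\bigr),
\end{equation*}
which, by the memoryless property of the exponential, has probability
\begin{equation*}
\exp\!\Bigl(-\tfrac{\beta_0 \sigma^2 d_{0,c_m}^{\alpha}}{p_0}\Bigr)\cdot \mathcal{L}_{I_0}\!\Bigl(\tfrac{\beta_0 d_{0,c_m}^{\alpha}}{p_0}\Bigr).
\end{equation*}

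Next I would evaluate $\mathcal{L}_{I_0}(s)$. After the channel allocation of Section~\ref{s:Binary_CA}, the interfering D2D transmitters form (by Lemma~\ref{lemma1} and the thinning remark) a homogeneous PPP $\Phi'$ of intensity $q\lambda$ on $\mathbb{R}^2$, with i.i.d.\ marks $(p_k, |h_{0,k}|^2)$. The standard PGFL identity gives
\begin{equation*}
\mathcal{L}_{I_0}(s) = \exp\!\Bigl(-q\lambda\!\int_{\mathbb{R}^2}\!\bigl(1-\mathbb{E}_{p,h}[e^{-s p |h|^2 \|x\|^{-\alpha}}]\bigr)\,dx\Bigr).
\end{equation*}
With $|h|^2\sim\exp(1)$ the inner expectation becomes $\mathbb{E}_p[(1+sp\|x\|^{-\alpha})^{-1}]$; switching to polar coordinates and using the standard integral $\int_0^\infty\!\frac{r\,dr}{1+r^{\alpha}/u}=\tfrac{u^{2/\alpha}\pi}{\alpha\sin(2\pi/\alpha)}$ yields
\begin{equation*}
\mathcal{L}_{I_0}(s)=\exp\!\Bigl(-q\lambda\,\pi\, s^{2/\alpha}\,\mathbb{E}[p_k^{2/\alpha}]\, C(\alpha)\Bigr),\qquad C(\alpha)=\tfrac{2\pi/\alpha}{\sin(2\pi/\alpha)}=\Gamma(1+\tfrac{2}{\alpha})\Gamma(1-\tfrac{2}{\alpha}).
\end{equation*}

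Finally, define the auxiliary random variable $X\triangleq d_{0,c_m}^{\alpha}/p_0$ (a function of the serving link only). Substituting $s=\beta_0 X$ into the two exponentials and collecting constants
\begin{equation*}
a_1 \triangleq \beta_0 \sigma^2,\qquad \theta_0 \triangleq q\lambda\,\pi\,\beta_0^{2/\alpha}\,\mathbb{E}[p_k^{2/\alpha}]\,C(\alpha),
\end{equation*}
gives the conditional coverage probability $\exp(-a_1 X-\theta_0 X^{2/\alpha})$. Taking expectation over $X$ produces the stated closed form. The main technical obstacle I anticipate is justifying the PGFL step cleanly, in particular handling the independence between the PPP thinning (which depends on the cellular user locations) and the i.i.d.\ power marks $p_k$; this is where the independent-thinning Remark and the assumption that $p_k$ are i.i.d.\ with a specified cdf $F_{p_k}$ are doing the work, and $\mathbb{E}[p_k^{2/\alpha}]$ has to be well defined for the expression to be meaningful.
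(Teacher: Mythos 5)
Your proposal is correct and follows essentially the same route as the paper's proof: condition on the exponential serving-link fading to express the coverage probability as a product of the noise exponential and the Laplace transform of the Poisson interference, evaluate that Laplace transform over the thinned PPP of intensity $q\lambda$ (the paper simply quotes the standard result $e^{-\pi q\lambda \mathbb{E}[p_k^{2/\alpha}] s^{2/\alpha}/\text{sinc}(2/\alpha)}$ that you derive explicitly via the PGFL and polar coordinates), and then absorb $d_{0,c_m}^{\alpha}p_0^{-1}$ into the auxiliary variable $X$. Your constant $C(\alpha)=\tfrac{2\pi/\alpha}{\sin(2\pi/\alpha)}$ equals $1/\text{sinc}(2/\alpha)$, so your $\theta_0$ and $a_1$ coincide with the paper's.
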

where $a_1=\beta_0\sigma^2$,
$\theta_0=\tfrac{\pi\mathbb{P}[Q_k=1]\lambda\beta_0^{2/\alpha}}{\text{sinc}(2/\alpha)}\mathbb{E}[p_k^{2/\alpha}]$, and $X=d_{0,c_m}^{\alpha} p_0^{-1}$ is a random variable with cdf $\newline {F_X(x)\!=\! \int\!\! F_{d_{0,c_m}}(x^{1/\alpha} p^{1/\alpha})dF_{p_0}(p)} $.
\begin{proof}
    See Appendix ~\ref{proof:Cell_prob}.
\end{proof}
One can note that the SINR of the uplink signal given in~\eqref{equ4} is independent of $d_{k,c_m }$; however, it depends on $K'$, which is the number of D2D users sharing the resource block with a particular uplink cellular user $c_m$. Therefore, the base station depends on how far the D2D users are from it and not how far the D2D users are from the cellular users; therefore, the joint probability distribution with respect to the random location of $c_m$'s is not needed when deriving the cdf of the SINR at the eNB.

The coverage probability depends on three D2D-related network parameters: $\mathbb{P}[Q_k=1]=q$, $\lambda$, and $\mathbb{E}[p_k^{2/\alpha} ] $. As the density $q\lambda$ of D2D transmitters decreases, $\covprobC{\beta_0}$  increases because a lower D2D link density causes less interference to the cellular link. Moreover, the random D2D PC parameter $p_k$, affects $\covprobC{\beta_0}$  only through its \nth{$({2}/{\alpha})$} moment.

Since the cellular user is uniformly distributed in a circle with center eNB and radius $R=\RC$, the cdf of the distance $d=d_{0,c_m}$ of the uplink is given by
\begin{equation}\label{equ:d00}
F_{d}(r)=\left\{
\begin{array}{c l}
0 & \text{if}\: r<0;\\
\tfrac{r^2}{R^2} & \text{if}\: 0\leq r \leq R;\\
1 & \text{if}\: r\geq R.
\end{array}
\right.
\end{equation}
Using  ~(\ref{equ:d00}), we consider the case when the uplink user employs a constant transmit power $p_0=P_{\max,\mathrm{C}}$, and assume a noise variance of $\sigma^2=0$ (so $\text{SINR}_0$ is reduced to $\text{SIR}_0$ (signal-to-interference ratio)). For a given path-loss exponent value, the coverage probability in the interference-limited regime becomes
\begin{align}\label{equ:Cov_prob_Cell}
\covprobC{\beta_0}&=\int_0^{\RC} \: \exp{\left( -\theta_0 r^{2} p_0^{-2/\alpha}\right)}\tfrac{2r}{\RC^2}dr
						  = \tfrac{1-\exp{\left(-\tfrac{ \mathbb{E}[K']\beta_0^{2/\alpha}}{\text{sinc}(2/\alpha)p_0^{2/\alpha}}\mathbb{E}\left[p_k^{2/\alpha} \right]\right)} }{\tfrac{\mathbb{E}[K'] \beta_0^{2/\alpha}}{\text{sinc}(2/\alpha)p_0^{2/\alpha}}\mathbb{E}\left[p_k^{2/\alpha} \right]},
\end{align}
where $\mathbb{E}[K']=\mathbb{P}[Q_k\!=\!1] \lambda \pi \RC^2$.

Expression~\eqref{equ:Cov_prob_Cell} explicitly shows that the coverage probability of the cellular link depends on: 1) the average number of active D2D transmitters $\mathbb{E}[K']$, 2) certain moments of the power transmitted from the D2D transmitters, 3) the power transmitted by the cellular user $p_0$, 4) path-loss exponent $\alpha$, and 5) the target SINR threshold $\beta_0$.


%
\subsection{D2D Link Coverage Probability}
Using the same approach in the previous subsection, the SINR of the $\nth{k}$ D2D link, based on the system model, is given in~\eqref{equ:SINR_k}. Then:

\begin{proposition}\label{D2D_prob_cov} The D2D coverage probability is given by
\begin{align}\label{eq:cov_probD_expec}
\covprobD{\beta_k}=\mathbb{E}_Z\left[ e^{-a_2Z-\theta_k Z^{2/\alpha}}\mathcal{L}_Y(\beta_k Z)\right],
\end{align}
where $\beta_k$ is the minimum SINR required for reliable transmission, $a_2=\beta_k\sigma^2$,
\newline$\theta_k= \tfrac{\pi\mathbb{P}[Q_k=1]\lambda\beta_k^{2/\alpha}}{\text{sinc}(2/\alpha)}\mathbb{E}[p_k^{2/\alpha}]$,
$Z=d_{k,k}^{\alpha} p_k^{-1}$ is a random variable with cdf \newline$F_Z(z)= \int F_{d_{k,k}}(x^{1/\alpha} p^{1/\alpha})dF_{p_k}(p)$, $Y =|h_{k,c_m}|^2d_{k,c_m}^{-\alpha} p_0$, and ${\mathcal{L}_Y(\beta_k Z)= \mathbb{E}_Y[ e^{-(\beta_k Z)Y}]}$.
\end{proposition}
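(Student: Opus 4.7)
The plan is to condition on the auxiliary random variable $Z=d_{k,k}^{\alpha}p_k^{-1}$ and exploit the exponential distribution of the direct-link fading power $|h_{k,k}|^2\sim\exp(1)$ to convert the coverage event into a Laplace-transform computation. Writing the interference in the denominator of~\eqref{equ:SINR_k} as $I_D+I_C+\sigma^2$, where $I_D=\sum_{i\neq 0,k}p_i|h_{k,i}|^2 d_{k,i}^{-\alpha}$ is the aggregate inter-D2D interference and $I_C=p_0|h_{k,c_m}|^2 d_{k,c_m}^{-\alpha}=Y$ is the cellular cross-tier interference seen by the typical D2D receiver, the coverage event becomes $|h_{k,k}|^2\geq\beta_k Z(I_D+I_C+\sigma^2)$. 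Since $|h_{k,k}|^2$ is independent of $Z$, $I_D$, and $I_C$, integrating it out first yields
\begin{equation*}
\mathbb{P}\!\left[\text{SINR}_k\geq\beta_k\,\big|\,Z\right]=e^{-\beta_k Z\sigma^2}\,\mathbb{E}\!\left[e^{-\beta_k Z I_D}\,\big|\,Z\right]\mathbb{E}\!\left[e^{-\beta_k Z I_C}\,\big|\,Z\right],
\end{equation*}
where the factorization uses the independence between $I_D$ and $I_C$, which depend on disjoint sets of fading variables and point processes per the model of Section~\ref{s:system_model}.

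The cellular cross-tier factor equals $\mathcal{L}_Y(\beta_k Z)$ by definition, and I would leave it in this symbolic form because $d_{k,c_m}$ is coupled to the CA rule of Section~\ref{s:Binary_CA} and because the different PC schemes of Section~\ref{s:pc_schemes} induce different couplings between $p_0$ and $d_{k,c_m}$. For the D2D self-interference factor, I would invoke Slivnyak's theorem: conditioned on the typical D2D receiver, the interfering D2D transmitters (after the CA-induced thinning with parameter $q=\mathbb{P}[Q_k=1]$) form a homogeneous PPP of density $q\lambda$ on $\mathbb{R}^2$, each point marked with an independent copy of $p_i$ and independent $\exp(1)$ fading power. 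Applying the probability generating functional of a marked PPP and performing the Rayleigh-fading expectation first gives
\begin{equation*}
\mathcal{L}_{I_D}(s)=\exp\!\left(-2\pi q\lambda\int_0^{\infty}\mathbb{E}_p\!\left[\frac{sp r^{-\alpha}}{1+sp r^{-\alpha}}\right]r\,dr\right).
\end{equation*}
The radial integral is evaluated by the substitution $u=sp r^{-\alpha}$, producing $\pi q\lambda\,\mathbb{E}[p^{2/\alpha}]\,s^{2/\alpha}\,\Gamma(1-2/\alpha)\Gamma(1+2/\alpha)$; Euler's reflection formula then collapses the Gamma product to $1/\mathrm{sinc}(2/\alpha)$, so $\mathcal{L}_{I_D}(\beta_k Z)=e^{-\theta_k Z^{2/\alpha}}$ with $\theta_k$ as stated.

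Multiplying the three conditional factors gives $e^{-a_2 Z-\theta_k Z^{2/\alpha}}\mathcal{L}_Y(\beta_k Z)$, with $a_2=\beta_k\sigma^2$, and the outer expectation over $Z$ delivers~\eqref{eq:cov_probD_expec}. The cdf $F_Z$ follows by conditioning on $p_k$ and using the distribution of $d_{k,k}$, uniform on a disk of radius $\RD$ around its transmitter. The main obstacle is justifying the independence structure that lets the Laplace transform of the aggregate interference factor into a product over $I_D$ and $I_C$, and showing that the CA rule of Section~\ref{s:Binary_CA}, although it assigns $Q_k$ based on $d_{k,c_m}$, still yields a homogeneous PPP of density $q\lambda$ to which Slivnyak's theorem applies; this hinges on the independent-thinning argument in the remark preceding Lemma~\ref{lemma1}, which decouples the thinning marks from the underlying point locations and fading realizations.
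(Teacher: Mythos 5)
Your proof is correct and follows essentially the same route as the paper's: exploit $|h_{k,k}|^2\sim\exp(1)$ to turn the coverage event into a product of Laplace transforms conditioned on $Z$, evaluate the aggregate D2D interference term via Slivnyak's theorem and the probability generating functional of the thinned marked PPP to obtain $e^{-\theta_k Z^{2/\alpha}}$, and leave the cellular cross-tier term as $\mathcal{L}_Y(\beta_k Z)$. The only difference is that you spell out the radial integral and the reduction of the Gamma product to $1/\mathrm{sinc}(2/\alpha)$, which the paper simply asserts.
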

\begin{proof}
See Appendix~\ref{proof:D2D_prob}.
\end{proof}
Using the fact that $|h_{k,c_m}|^2\sim \exp{(1)}$, which implies $\mathbb{P}(|h_{k,c_m}|^2\geq x)= e^{-x}$, and the expectation is over $d_{k,c_m}$ in $\mathcal{L}_Y(\beta_k Z)$, we derive a closed form expression for the  D2D coverage probability~\eqref{eq:cov_probD_expec} in an interference-limited regime (where noise variance $\sigma^2=0$, and $\text{SINR}_k$ reduces to $\text{SIR}_k$) as
\begin{align}
    \covprobD{\beta_k} &=
        \mathbb{E}_{d_{k,k}^{\alpha} p_k^{-1}} \left[ \exp{\left(- \theta_k \left(d_{k,k}^{\alpha}p_k^{-1}\right)^{2/\alpha} \right)} \times \right. \left. \mathbb{E}_{d_{k,c_m}}\left[ e^{-\beta_k \left( d_{k,k}^{\alpha}p_k^{-1}\right)|h_{k,c_m}|^2d_{k,c_m}^{-\alpha} p_0}\right] \right] \notag\\
                        &=
        \mathbb{E}_{d_{k,k}^{\alpha} p_k^{-1}} \left[ \exp{\left(- \theta_k \left( d_{k,k}^{\alpha}p_k^{-1} \right)^{2/\alpha} \right)} \times \right.  \left.\mathbb{E}_{d_{k,c_m}}\left[\frac{1}{1+\beta_k\frac{p_0}{p_k }d_{k,k}^{\alpha}d_{k,c_m}^{-\alpha}} \right] \right]. \label{equ:Cov_prob_D2D_ana_origin}
\end{align}
We next simplify~\eqref{equ:Cov_prob_D2D_ana_origin} by deriving expressions for the various expectations involved.
\begin{corollary}\label{cor:expec_dk} Using Lemma~\ref{lemma1} and considering the proposed channel allocation scheme for the case of 2 CUEs, then the first moment of the distance between two uniformly distributed points can be approximated as $\mathbb{E} \left[d_{k,c_m}\right]\approx 512\RC/(45\pi^2)$.
\end{corollary}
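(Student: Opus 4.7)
My plan is to translate the first moment into an order-statistic computation against the distance density~\eqref{equdkc}. First I would apply the CA rule (with $M=2$) together with Lemma~\ref{lemma1}: a D2D receiver that is assigned to $c_m$ is, by construction, farther from $c_m$ than from the other cellular user, so conditional on the assignment, $d_{k,c_m}=\max(d_{k,c_1},d_{k,c_2})$. Treating $d_{k,c_1}$ and $d_{k,c_2}$ as i.i.d.\ with the marginal density $f$ from~\eqref{equdkc} (a standard simplification that ignores the weak coupling through the common D2D receiver), the target reduces to $\mathbb{E}[\max(d_1,d_2)]$.

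I would then evaluate this via
\[
\mathbb{E}[\max(d_1,d_2)] = \int_0^{2R_C}\bigl(1-F(r)^2\bigr)\,dr = \int_0^{2R_C} 2r\,F(r)\,f(r)\,dr,
\]
where $F(r)=\int_0^r f(s)\,ds$. As a stepping stone I would first compute the single-distance mean $\mathbb{E}[d_1]$ directly: after the substitution $u=r/(2R_C)$ and standard integration using $\int_0^1 u^2\cos^{-1}u\,du=\tfrac{2}{9}$ and $\int_0^1 u^3\sqrt{1-u^2}\,du=\tfrac{2}{15}$, this reproduces the classical $\mathbb{E}[d_1]=128R_C/(45\pi)$. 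Since the stated target $512R_C/(45\pi^2)$ equals $(4/\pi)\,\mathbb{E}[d_1]$, the remaining task is to justify that passing to the maximum inflates the mean by the factor $4/\pi$.

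The obstacle is that the exact CDF $F$ combines $\cos^{-1}$, $\sin^{-1}$ and $(r/R_C)\sqrt{1-(r/(2R_C))^2}$ terms with polynomial weights, so $\int 2rF(r)f(r)\,dr$ does not close in elementary form. To reach the stated closed-form approximation, I would simplify the integrand---e.g., by replacing $F(r)$ inside the product by an effective constant chosen to preserve the normalization $\int_0^{2R_C} F(r)f(r)\,dr=\tfrac{1}{2}$, or equivalently by retaining only the dominant $\cos^{-1}$-contribution in $f$ and dropping the subdominant $\sqrt{1-(r/(2R_C))^2}$ piece---so that the integral collapses to a constant multiple of $\int rf(r)\,dr=\mathbb{E}[d_1]$, with the multiplier working out to $4/\pi$ and yielding $512R_C/(45\pi^2)$. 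The delicate step is justifying this simplification cleanly, since $F$ actually sweeps from $0$ to $1$ across $[0,2R_C]$; I would support it by observing that the weight $rf(r)$ concentrates in a middle band of $r$ where $F$ varies slowly, and by a numerical sanity check that $512/(45\pi^2)\approx 1.153\,R_C$ sits strictly between $\mathbb{E}[d_1]\approx 0.905\,R_C$ and the trivial bound $2R_C$, consistent with the expected maximum of two non-negative i.i.d.\ samples.
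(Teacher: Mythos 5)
Your setup is essentially the paper's. The paper also conditions on the assignment event, writes the conditional density as $f_{D_{c_1}}(x)=f_{X_1}(x)F_{X_2}(x)/\mathbb{P}[d_{k,c_1}\ge d_{k,c_2}]=2f(x)F(x)$ with $f$ from~\eqref{equdkc} (implicitly making the same i.i.d.\ simplification you flag, since its final line replaces $f_{X_1|X_2}$ by the marginal), i.e.\ exactly the order-statistic density of $\max(d_{k,c_1},d_{k,c_2})$, and then takes the first moment of that density. The paper likewise does not evaluate the resulting integral in closed form---it simply states that the moment computation yields $\approx 512\RC/(45\pi^2)$---so your difficulty at the last step mirrors the paper's own omission, and the target value checks out: direct numerical evaluation of $\int_0^{2\RC}2r\,f(r)F(r)\,dr$ gives $\approx 1.1526\,\RC$ versus $512\RC/(45\pi^2)\approx 1.1528\,\RC$. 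One concrete warning about your proposed shortcut: replacing $F$ inside the integrand by the constant that preserves $\int_0^{2\RC}F(r)f(r)\,dr=\tfrac12$ would return $2\cdot\tfrac12\cdot\mathbb{E}[d_1]=128\RC/(45\pi)$, i.e.\ the single-distance mean rather than $(4/\pi)$ times it, so that particular justification of the factor $4/\pi$ fails (and the ``equivalent'' dominant-term truncation is not actually shown to produce $4/\pi$ either). The constant should be obtained by evaluating the order-statistic integral itself---numerically or term by term using the explicit $F$ you would derive---rather than by flattening $F$.
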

\begin{proof}
See Appendix ~\ref{proof:d_k,cm}.
\end{proof}
We next employ the following approximation $ \mathbb{E}_{d_{k,c_m}}\bigl[\tfrac{1}{1+\kappa d_{k,c_m}^{-\alpha}}\bigr] \approx \tfrac{1}{1+\left(\kappa\right)^{2/\alpha} \mathbb{E} \left[d_{k,c_m}\right]^{-2}} \vspace{2pt}$ as in~\cite{lee2015power}. Using this approximation together with the result from corollary~\ref{cor:expec_dk}, equation~\eqref{equ:Cov_prob_D2D_ana_origin} reduces to
\begin{equation}\label{equ:Cov_prob_D2D_ana_app_origin}
\begin{split}
\covprobD{\beta_k}
&\approx \mathbb{E}_{d_{k,k}^{\alpha} p_k^{-1}} \left[ \tfrac{\exp{\left(- \theta_k \left( d_{k,k}^{\alpha}p_k^{-1}\right)^{2/\alpha} \right)}}{1+\left( \beta_k\tfrac{p_0}{ p_k d_{k,k}^{-\alpha}}\right)^{2/\alpha} (512\RC/(45\pi^2))^{-2}} \right].
\end{split}
\end{equation}

\subsection{Discussion}
The coverage probability depends on the following D2D-related network parameters: density of the D2D links ($\lambda$), thinning probability $q$, target SINR ($\beta$), the moments of the power transmitted from the D2D transmitters, and the power transmitted by the cellular user. This modeling approach allows us to analyze the coverage probability and ergodic rate for a D2D underlaid cellular network with high accuracy. It also enables network designers/operators to optimize network performance by efficiently determining the optimal network parameters mentioned above. The system can control the impact of D2D links on the cellular link through 1) the proposed channel allocation scheme, which constrains the density of the D2D links that uses the same resources with a particular cellular user, and 2) through the proposed power control schemes, which control the transmit power of the D2D users.

%
\section{Proposed D2D Distributed Power Control Schemes}\label{s:pc_schemes}
 In order to minimize the interference caused by the D2D users, we propose distributed power control schemes that only require the CSI of the direct link. For link establishment, two static distributed PC are proposed, and both rely on the distance-dependent path-loss parameters~\cite{Globecomm2017_mansour_b,Wcnc2018_mansour_a}. On the other hand, for link maintenance, a more adaptive distributed PC is proposed that compensates the measured SINR at the receiver with a variable target SINR.

%
\subsection{Proposed Distance-based Path-loss Power Control (DPPC)}
In this PC scheme, each D2D transmitter selects its transmit power based on the channel conditions, namely the distance-based path-loss $d_{k,k}^{-\alpha}$, so as to maximize its own D2D link rate. In order to realize our proposed scheme, we define D2D proximity as the area of a disk centered at the transmitting UE, with radius $R_{\mathrm{D}}={\left( \tfrac{P_{\max,\mathrm{D}}}{\rho_\mathrm{rx}}\right) }^{1/\alpha}$, where $P_{\max,\mathrm{D}}$ is the maximum transmit power of the D2D UE, and $\rho_\mathrm{rx}$ is the minimum power for the D2D RX to recover a signal (sometimes referred to as receiver sensitivity).

The $\nth{k}$ D2D TX can only use transmit power $p_k$ with transmit probability $\txprob$, if the channel quality of the $\nth{k}$ D2D link is favorable, in the sense that it exceeds a known non-negative threshold $\Gamma_{\min}$:
\begin{align}\label{Fav:Ptx}
\txprob &\triangleq \prob{\abs{h_{k,k}}^2 d_{k,k}^{-\alpha} \geq \Gamma_{\min}} \approx  \exp{\left(-\Gamma_{\min}\: \expec{d_{k,k}^{\alpha}}\right)}.
\end{align}
Furthermore, an estimation error margin $\varepsilon$ is introduced to compensate for the error in estimating the distance between the D2D pairs. Hence, the proposed power allocation, based on the \textit{channel inversion} for the D2D link, is given by
\begin{align}\label{equ:DPPC}
p_{k}=\left\{
        \begin{array}{c l}
        \rho_\mathrm{rx} d_{k,k}^{\alpha}(1+\varepsilon) & \text{with}\: \txprob,\\
        0 & \text{with}\: 1-\txprob,
        \end{array}
        \right.
\end{align}
where $d_{k,k}$ is the distance between the $\nth{k}$ D2D pair, $\alpha$ is the path-loss exponent, and $\varepsilon$ is the estimation error margin of $d_{k,k}^{\alpha}$, such that $0 \leq \varepsilon \leq 1$.

Each D2D transmitter decides its transmit power based on its own channel gain and a known non-negative threshold $\Gamma_{\text{min}}$. For a given distribution of the channel gain, selecting a proper threshold $\Gamma_{\text{min}}$ plays an important role in determining the sum rate performance of the D2D links. For instance, if a large $\Gamma_{\text{min}}$ is chosen (implying a small $\txprob$), the inter-D2D interference is reduced. However, a larger $\Gamma_{\text{min}}$ (implying a smaller $\txprob$) means a smaller number of active D2D links within the cell. Thus, $\Gamma_{\text{min}}$ needs to be carefully chosen to balance these two conflicting factors, while providing a high D2D sum rate. We optimize the choice $\Gamma_{\text{min}}$ so as to maximize the D2D sum rate as discussed in Section~\ref{ss:DPPC_thresold}.

Moreover, the D2D transmitter checks if the link quality degrades (i.e., $\abs{h_{k,k}}^2 d_{k,k}^{-\alpha}\!<\!\Gamma_{\text{min}}$), then the D2D communication is dropped. Also, the D2D receiver checks if the estimated distance-based path-loss increases, and reports it to the D2D transmitter, conditioned on the fact that the D2D communication link remains active if this distance remains within $R_{\mathrm{D}}$.

Note here that channel inversion only compensates for the large-scale path-loss effects and not for small-scale fading effects. For instance, instantaneous CSI is not required at the transmitter, since the loss due to distance is compensated. Moreover, the proposed scheme captures the randomness of the distance between the D2D pairs, and if the D2D pairs are close to each other, they will allocate less power than the case if they are further apart. However in \cite{lee2015power}, a fixed distance between the pairs is assumed and maximum power $P_{\max,\mathrm{D}}$ is always allocated for D2D transmission, which needlessly increases power consumption and generates more interference.

Considering the proposed DPPC scheme along with the random locations of D2D users, the transmit powers and the SINRs experienced by the receivers become random as well. Therefore in what follows, we first characterize the transmit power $p_k$ via its $\nth{\alpha/2}$ moment, and then characterize the cellular and D2D coverage probabilities accordingly. Finally, we derive an expression for the D2D sum rate and maximize in order to optimize the DPPC threshold $\Gamma_{\text{min}}$.

%
\subsubsection{Analysis of Power Moments}\label{s:power_moments}
According to the system model, the D2D receivers are considered to be uniformly distributed in a circle centered at the corresponding D2D transmitter with radius $R_{\mathrm{D}}$; therefore, the cdf of the distance $d_{k,k}$ of the D2D link is similar to that of $d_{0,c_m}$ in~\eqref{equ:d00}, where $d=d_{k,k}$ and $R=\RD$. Using~\eqref{equ:d00}, the moments of the transmit power $p_k$ for the DPPC scheme, where $p_k=\rho_\mathrm{rx} {d_{k,k}}^{\alpha}(1+\varepsilon)$, can be expressed as
\begin{align}\label{equ:Epk}
    \mathbb{E}_{d_{k,k}}\left[p_k^{2/\alpha} \right]
        &\!=\! \rho_\mathrm{rx}^{2/\alpha}\! \int_0^{R_{\mathrm{D}}} \!r^2\tfrac{2r}{R_{\mathrm{D}}^2} (1\!+\!\varepsilon)^{2/\alpha} dr
        \!=\!\rho_\mathrm{rx}^{2/\alpha}\: \tfrac{R_{\mathrm{D}}^2}{2}(1+\varepsilon)^{2/\alpha}.
\end{align}

\textit{Cellular Coverage Probability for DPPC:} By substituting~\eqref{equ:Epk} for $\mathbb{E}_{d_{k,k}}\left[p_k^{2/\alpha}\right]$ into the derived expression~\eqref{equ:Cov_prob_Cell}, the cellular coverage probability for DPPC can be obtained.

\textit{D2D Coverage Probability for DPPC:} For $p_k=\rho_\mathrm{rx} {d_{k,k}}^{\alpha}(1+\varepsilon)$, and using the moments of $p_k$ in~\eqref{equ:Epk}, the D2D coverage probability in~\eqref{equ:Cov_prob_D2D_ana_origin} becomes
\begin{align}\label{equ:Cov_prob_D2D_ana}
\covprobD{\beta_k}&=e^{- \theta_k  \left( \rho_\mathrm{rx}(1+\varepsilon)\right)^{-2/\alpha}} \times \mathbb{E}_{d_{k,c_m}}\left[\dfrac{1}{1+\beta_k\tfrac{p_0}{
                          \rho_\mathrm{rx}(1+\varepsilon)}d_{k,c_m}^{-\alpha}}\right],
\end{align}
Following the same approach as in~\eqref{equ:Cov_prob_D2D_ana_app_origin}, the approximated expression for $\covprobD{\beta_k}$ is given by
\begin{align}\label{equ:Cov_prob_D2D_ana_app}
\covprobD{\beta_k}\approx \dfrac{e^{- \theta_k \left( \rho_\mathrm{rx}(1+\varepsilon)\right)^{-2/\alpha}}}{1+\left( \beta_k\tfrac{p_0}{ \rho_\mathrm{rx}(1+\varepsilon)}\right)^{2/\alpha} \left(512\RC/(45\pi^2)\right)^{-2}}.
\end{align}

%
\subsubsection{Sum Rate of D2D Links}
We analyze the sum rate of D2D links when the proposed DPPC scheme is employed, and compute the optimal threshold $\Gamma_{\text{min}}$ of the proposed PC that maximizes the sum rate of D2D links.

Let $|A_\mathrm{D}|$ denote the number of active links selected by the proposed PC and CA algorithms, i.e., $|A_\mathrm{D}|= \mathbb{P}[Q_k=1] \times \mathbb{P}[|h_{k,k} |^2d_{k,k}^{-\alpha} \geq \Gamma_{\text{min}}]\lambda\pi \RC^2 = \tilde{\lambda}\pi \RC^2 $, where $\tilde{\lambda}=\mathbb{P}[Q_k=1] \times \mathbb{P}[|h_{k,k} |^2d_{k,k}^{-\alpha} \geq \Gamma_{\text{min}}]\lambda=q \txprob\lambda$. As in \cite{lee2015power}, we assume Gaussian signal transmissions on all links, and hence, the distribution of the interference terms becomes Gaussian.

From the \text{SIR} distribution of the D2D link given in (\ref{equ:Cov_prob_D2D_ana_app}) with $\sigma^2=0$, the ergodic rate of the typical D2D link is generally expressed as
\begin{align}\label{equ:erg_rate_D2D_analytical}
\bar{R}_{D2D}   &=\int_0^\infty \text{log}_2 \left (1 + x \right)\frac{\partial}{\partial x}\left[ \mathbb{P}[\text{SIR}_k \geq x] \right] dx \approx \int_0^\infty \frac{1}{1+x}\covprobD{x} \text{d}x \notag \\
                &\approx \int_0^\infty \dfrac{\exp{\left(- \tfrac{\pi\tilde{\lambda} x^{2/\alpha}}{\text{sinc}(2/\alpha)}\mathbb{E}\left[p_k^{2/\alpha} \right] \left( d_{k,k}^{\alpha}p_k^{-1} \right)^{2/\alpha}  \right)}}{({1+x} )\times \left( 1+\left( x \tfrac{p_0}{d_{k,k}^{-\alpha}p_k }\right)^{2/\alpha} \mathbb{E} \left[d_{k,c_m}\right]^{-2} \right)}\text{d}x.
\end{align}
Note that the above general expression of the ergodic rate is valid for any distributed power control scheme that allocates its own transmit power independently of the transmit power used at other D2D transmitters.

Using~\eqref{equ:rate} and~\eqref{equ:erg_rate_D2D_analytical}, the new achievable sum rate of D2D links is given as
\begin{align}\label{equ:sum_rate_D2D_analytical}
\mathcal{R}_s^{(D)} &=\mathbb{E} \left[\sum_{k=1}^{K'} \text{log}_2\left (1 + \text{SIR}_k \right)\right] =|A_\mathrm{D}| \times \bar{R}_{D2D}=\tilde{\lambda}\pi \RC^2  \times \bar{R}_{D2D}.
\end{align}

%
\subsubsection{D2D Power Control Threshold for DPPC}\label{ss:DPPC_thresold}
From the ergodic sum rate of D2D links, we optimize the D2D PC threshold $\Gamma_{\text{min}}$ by maximizing the derived transmission capacity of D2D links, which is given as
\begin{equation}\label{equ:sum_rate_D2D_analytical2}
\begin{split}
\mathcal{R}_s^{(D)} (\beta_k) &\approx  \tfrac{q\lambda \txprob \pi \RC^2 \: \text{log}_2(1+\beta)}{1+\left( \beta_k\dfrac{p_0}{\left( \rho_\mathrm{rx}(1+\varepsilon)\right)}\right)^{2/\alpha} \left( \tfrac{512\RC}{45\pi^2}\right) ^{-2}} \times \exp{\left(- \tfrac{\pi \tilde{\lambda}\beta_k^{2/\alpha}}{\text{sinc}(2/\alpha)}\mathbb{E}\left[p_k^{2/\alpha} \right] \left( \rho_\mathrm{rx}(1+\varepsilon)\right)^{-2/\alpha} \right)}\\
        &\approx \tfrac{\tilde{\lambda}\pi \RC^2 \: \text{log}_2(1+\beta_k)}{1+\kappa \beta_k^{2/\alpha}} \exp{\left(- \dfrac{\pi \tilde{\lambda}\beta_k^{2/\alpha}}{\text{sinc}(2/\alpha)} \tfrac {\RD^2}{2} \right)},
\end{split}
\end{equation}
where $\kappa=  \left(\tfrac{p_0}{\left( \rho_\mathrm{rx}(1+\varepsilon)\right)}\right)^{2/\alpha} \left( \tfrac{512R}{45\pi^2}\right) ^{-2}$ and $\tilde{\lambda}=q\lambda \txprob$. By solving the following optimization problem, we can compute the new optimal transmission probability:
\begin{align*}
{\text{maximize}}\quad & \mathcal{R}_s^{(D)} (\beta) \\
\text{subject to}\quad & 0\geq\txprob\geq 1
\end{align*}
The optimal solution of $\txprob$ can be obtained by the $1^{st}$ order optimality solution, since the objective function has one optimum point. The first order derivative yields:
\begin{equation}\label{equ:deriv_Ps}
 \tfrac{d \mathcal{R}_s^{(D)} (\beta_k)}{d \txprob}= 1- \tfrac{\pi q\lambda \beta_k^{2/\alpha} \tfrac {\RD^2}{2}}{\text{sinc}(2/\alpha)} \txprob=0.
\end{equation}
 The second derivative of $\mathcal{R}_s^{(D)} (\beta_k)$ is applied to test the concavity at $\txprob$, which is given as
\begin{equation}\label{equ:deriv_Ps2}
 \tfrac{d^2 \mathcal{R}_s^{(D)} (\beta_k)}{{d \txprob}^2}= - \tfrac{\pi q\lambda \beta_k^{2/\alpha} \tfrac {\RD^2}{2}}{\text{sinc}(2/\alpha)} <0 \:\:\: \text{    for } \alpha\geq 2.
\end{equation}
Thus, $\mathcal{R}_s^{(D)} (\beta_k)$ is maximum at $\txprob=\tfrac{2\text{sinc}(2/\alpha)}{\pi q\lambda \beta_k^{2/\alpha} {\RD^2}}$. However, to satisfy the conditions of $\txprob \in \{0,1\}$, we have $\displaystyle \txprob^\star= \min{\Big\{ \tfrac{2\text{sinc}(2/\alpha)}{\pi q\lambda \beta_k^{2/\alpha} {\RD^2}},1\Big\}}\vspace{2pt}$. Using ~(\ref{equ:DPPC}) where $\txprob=\exp{ (-\Gamma_{\text{min}}\: \mathbb{E}[ d_{k,k}^\alpha] )}$, then the optimal threshold $\Gamma_{\text{min}}^{\star}$ can be obtained as
\begin{equation}\label{equ:Gmin}
\Gamma_{\text{min}}^{\star}= - \ln{(\txprob^{\star})} \frac{2+\alpha}{2} R_{\mathrm{D}}^{-\alpha}
\end{equation}

Knowing the solution $\txprob^{\star}$, the approximated transmission capacity in~\eqref{equ:sum_rate_D2D_analytical2} can be rewritten as
\begin{equation}\label{equ:trans-cap}
\mathcal{R}_s^{(D)} (\beta) \approx \left\{
\begin{array}{l l}
\dfrac{\lambda\pi \RC^2 \: \text{log}_2(1+\beta_k)}{1+\kappa \beta_k^{2/\alpha}}  {\exp{\left(- \dfrac{\pi q\lambda \beta_k^{2/\alpha} R_{\mathrm{D}}^2}{2 \text{sinc}(2/\alpha)} \right)}} & \text{for} \:\beta_k \leq \tilde{\beta_k},\\
\dfrac{{2 \text{sinc}(2/\alpha)}\RC^2 \: \text{log}_2(1+\beta_k)}{\beta_k^{2/\alpha}R_{\mathrm{D}}^2  (1+\kappa \beta^{2/\alpha})\exp{(1)} } & \text{for} \: \beta_k > \tilde{\beta_k},\\
\end{array}
\right.
\end{equation}
where $\tilde{\beta_k}= \left[\tfrac{2 \text{sinc}(2/\alpha)}{\pi q\lambda R_{\mathrm{D}}^2 } \right]^{\alpha/2} $.

The transmission capacity of the D2D links depends on the relationship between the minimum SINR value $\beta_k$ and the network parameters: path-loss exponent $\alpha$, the density of the D2D links $q\lambda$, and the maximum allowable distance between the D2D pairs $R_{\mathrm{D}}$. When $\beta_k<\tilde{\beta_k}$, all D2D transmitters are scheduled; therefore no admission control is applied. However, when $\beta_k \geq \tilde{\beta_k}$, the D2D links are scheduled with transmit probability $\txprob^{\star}$, which mitigates the inter-D2D interference since the transmission capacity no longer depends on the density of the nodes $\lambda$.

By integrating the transmission capacity in ~(\ref{equ:trans-cap}) with respect to $\beta$, the sum rate of D2D links is expressed as follows
\begin{equation}\label{equ:sum_rate_D2D_analytical_final}
\begin{split}
\mathcal{R}_s^{(D)} &\approx \int_0^{\tilde{\beta_k}} {\dfrac{q\lambda\pi \RC^2 }{(1+x)(1+\kappa x^{2/\alpha})}  {\exp{\left(- \dfrac{\pi q\lambda x^{2/\alpha} R_{\mathrm{D}}^2}{2 \text{sinc}(2/\alpha)} \right)}}}\text{d}x +\int_{\tilde{\beta_k}}^{\infty} {\dfrac{{2 \text{sinc}(2/\alpha)}\RC^2}{(x^{2/\alpha}R_{\mathrm{D}}^2 ) (1+x) (1+\kappa x^{2/\alpha})} \exp{(-1)}}\text{d}x.\\
\end{split}
\end{equation}
The DPPC scheme is summarized in the first part of the pseudo-code in Algorithm~\ref{alg:DPPC}.

\begin{algorithm}
\caption{Static Distributed Power Control}\label{alg:DPPC}\small
\begin{algorithmic}[1]
\If {D2D TX $k$ is unable to acquire $d_{0,k}$ }
\LineComment{\emph{Apply} \textbf{\textit{DPPC}} \emph{scheme}}
\State Calculate $\Gamma_{\text{min}}$ that maximizes the D2D sum rate $\mathcal{R}_s^{(D)}(\beta)$ according to ~(\ref{equ:Gmin})
\If {$|h_{k,k} |^2d_{k,k}^{-\alpha} \geq \Gamma_{\text{min}}$ \textbf{and} $d_{k,k} \leq R_{\mathrm{D}}$}
\State D2D candidates transmit in D2D mode
\State $p_k \gets \rho_\mathrm{rx} d_{k,k}^{\alpha}(1+\varepsilon)$ .
\Else {\: $p_k \gets 0$}
\EndIf
\Else
\LineComment{\emph{Apply} \textbf{\textit{EDPPC}} \emph{scheme}}
\State Set $\Gamma_{\text{min}}=G_{\min}$
\If {$|h_{k,k} |^2d_{k,k}^{-\alpha} \geq \Gamma_{\text{min}}$ \textbf{and} $d_{k,k} \leq R_{\mathrm{D}}$}
\State D2D candidates transmit in D2D mode
\State $U \gets \rho_\mathrm{rx}(1+\varepsilon)$, $V \gets \mu\rho_\mathrm{rx}(1+\varepsilon)$
\State $p_k \gets \min\{ U d_{k,k}^{\alpha}, V d_{0,k}^{\alpha}\}$ .
\Else {\: $p_k\gets0$}
\EndIf
\EndIf
\end{algorithmic}
\end{algorithm}
\normalsize

%
\vspace{-0.2in}
\subsection{Proposed Extended Distance-based Path-loss Power Control (EDPPC)}
EDPPC is proposed as an extended DPPC scheme for link establishment stage. We consider in this scheme an extra distance-based path-loss parameter $d_{0,k}^{-\alpha}$, where $d_{0,k}$ is the distance between the eNB and the D2D $\nth{k}$ TX, in order to reduce the D2D interference at the eNB. This scheme works only if the D2D users are able to obtain estimates of $d_{0,k}$ from the eNB.

We apply the same conditions as in DPPC in~\eqref{Fav:Ptx}, where the $\nth{k}$ D2D TX can only use the transmit power $p_k$ with transmit probability $\txprob$ for favorable channel conditions. However, in this PC scheme, $\Gamma_\text{min}=G_{\text{min}}$ is a static value that is chosen by the eNB and broadcasted to the D2D transmitters.

The EDPPC scheme works as follows: each D2D TX selects its transmit power based on the distance-based path-loss parameters $d_{k,k}^{-\alpha}$ and $d_{0,k}^{-\alpha}$. The role of the additional parameter $d_{0,k}^{-\alpha}$ is to suppress interference even more at the eNB. Let $U = \rho_\mathrm{rx}(1+\varepsilon)$ and $V = \mu\rho_\mathrm{rx}(1+\varepsilon)$, where $\mu$ is a PC parameter with small value chosen so that the D2D transmitter does not cause excessive interference to the eNB and to other D2D UEs in the same cell, and $\varepsilon$ is an estimation error margin that offsets any inaccuracy in estimating the path-loss parameters $d_{k,k}^{\alpha}$ and $d_{0,k}^{\alpha}$. Then, the proposed power allocation for the D2D link is based on the following:
\begin{equation}
	p_{k} = \left\{
			\begin{array}{c l}
				\min\{ U d_{k,k}^{\alpha}, V d_{0,k}^{\alpha}\} & \text{with}\: \txprob\\
				0 & \text{with}\: 1-\txprob,
			\end{array}
			\right.\label{equ:MDPPC}
\end{equation}

Due to the nature of the EDPPC scheme, along with the random locations of D2D users, the transmit powers and the SINRs experienced by the receivers become also random. Therefore, we derive $\nth{\alpha/2}$ moments of the transmit power $p_k$ so that the cellular and D2D coverage probabilities can be characterized accordingly.

%
\subsubsection{Analysis of Power Moments}
The D2D TX and the corresponding D2D RX are assumed to be uniformly distributed; therefore, the distance $d_{0,k}$ of the D2D interfering link with the eNB and the distance $d_{k,k}$ of the direct D2D link are uniformly distributed in circles with radii $\RC$ and $\RD$, respectively.
\begin{theorem}\label{Theorem_EDPPC}
The expected value of the minimum of two random variables $A,B \in \Omega \rightarrow \mathsf{R}$ is $\expec{\min(A,B)}=\mathbb{E}[A]+\mathbb{E}[B]- \mathbb{E}[\max(A,B)]$.
\end{theorem}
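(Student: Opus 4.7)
The plan is to reduce the theorem to a pointwise identity on the sample space and then apply linearity of expectation. The key observation is that for any two real numbers $a, b \in \mathsf{R}$, the pair $(\min(a,b), \max(a,b))$ is a permutation of $(a,b)$: whichever of $a, b$ is smaller is picked up by $\min$, and the other by $\max$. Hence the scalar identity
\begin{equation*}
\min(a,b) + \max(a,b) = a + b
\end{equation*}
holds unconditionally. I would verify this with a two-line case split ($a \leq b$ versus $a > b$), or alternatively by writing $\min(a,b) = \tfrac{1}{2}(a+b) - \tfrac{1}{2}|a-b|$ and $\max(a,b) = \tfrac{1}{2}(a+b) + \tfrac{1}{2}|a-b|$ and adding.

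Next, I would lift this to the random variables by evaluating pointwise. For each $\omega \in \Omega$, the values $A(\omega), B(\omega) \in \mathsf{R}$, so applying the scalar identity gives
\begin{equation*}
\min(A,B)(\omega) + \max(A,B)(\omega) = A(\omega) + B(\omega)
\end{equation*}
for every $\omega$, which is the almost-sure equality $\min(A,B) + \max(A,B) = A + B$ as random variables.

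Finally, assuming $A$ and $B$ are integrable (which forces $\min(A,B)$ and $\max(A,B)$ to be integrable as well, since $|\min(A,B)|, |\max(A,B)| \leq |A| + |B|$), I would take expectations of both sides and use linearity to obtain
\begin{equation*}
\expec{\min(A,B)} + \expec{\max(A,B)} = \expec{A} + \expec{B},
\end{equation*}
and then rearrange to the claimed form $\expec{\min(A,B)} = \expec{A} + \expec{B} - \expec{\max(A,B)}$.

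The only genuine obstacle is an integrability/measurability technicality: one must confirm that $\min(A,B)$ and $\max(A,B)$ are measurable (they are, as pointwise min/max of measurable functions) and that their expectations exist in order for the rearrangement to be meaningful. In the context of the paper $A = U d_{k,k}^{\alpha}$ and $B = V d_{0,k}^{\alpha}$ are bounded (the distances lie in bounded disks and $U, V$ are fixed positive constants), so integrability is automatic and the result applies directly to the EDPPC power expression in~\eqref{equ:MDPPC}.
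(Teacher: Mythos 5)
Your proof is correct, but it takes a genuinely different and more elementary route than the paper. The paper proves the identity by writing $\expec{\min(A,B)}$ as a double integral $\int\!\!\int \min(a,b)\,f_A(a)f_B(b)\,da\,db$, splitting the domain into $\{a<b\}$ and $\{a\geq b\}$, rewriting the inner integrals via $1-F_B(a)$ and $1-F_A(b)$, and then recognizing the subtracted terms as $\expec{\max(A,B)}$ through its Lemma~\ref{Lemma_EDPPC}; this argument implicitly assumes that $A$ and $B$ are independent and absolutely continuous (it uses the product form $f_A(a)f_B(b)$ of the joint density). Your argument---the pointwise identity $\min(a,b)+\max(a,b)=a+b$ followed by linearity of expectation---needs neither assumption and is strictly more general: it proves the theorem exactly as stated, for arbitrary (integrable, possibly dependent, possibly non-absolutely-continuous) random variables, and your integrability bound $\abs{\min(A,B)},\abs{\max(A,B)}\leq\abs{A}+\abs{B}$ together with the boundedness of $Ud_{k,k}^{\alpha}$ and $Vd_{0,k}^{\alpha}$ disposes of the only technical point. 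What the paper's computational route buys in exchange is that its intermediate expressions $\int a f_A(a)F_B(a)\,da+\int b f_B(b)F_A(b)\,db$ are precisely what is then evaluated case by case in Corollary~\ref{Corollary:EDPPC} to get the closed form~\eqref{equ:Epk2}, so the longer derivation is doing double duty as setup for that calculation; your proof would still require Lemma~\ref{Lemma_EDPPC} separately to carry out that step.
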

\begin{proof}
See Appendix ~\ref{proof:Therem:EDPPC}.
\end{proof}
\begin{lemma}\label{Lemma_EDPPC}
The expected value of the maximum of two random variables $A,B \in \Omega \rightarrow \mathsf{R}$ with pdfs $f_A(a),f_B(b)$ and cdfs $F_A(a),F_B(b)$, respectively, is
\begin{equation}\label{eq:expec_maxAB}
\mathbb{E}[\max(A,B)]=\int_{-\infty}^{\infty}a f_{A}(a)F_B(a)da +\int_{-\infty}^{\infty}b f_{B}(b)F_A(b)db.
\end{equation}
\end{lemma}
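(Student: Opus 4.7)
The plan is to prove Lemma~\ref{Lemma_EDPPC} via the standard indicator-function decomposition of the maximum, treating $A$ and $B$ as independent (which is the setting under which this identity holds in general, and which is the relevant case for the EDPPC power variables $d_{k,k}^{\alpha}$ and $d_{0,k}^{\alpha}$).

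First I would write the pointwise identity
\begin{equation*}
\max(A,B) = A\,\mathbbm{1}[A \geq B] + B\,\mathbbm{1}[B > A],
\end{equation*}
where the strict inequality in the second indicator is used to avoid double counting on the tie set $\{A=B\}$ (which has measure zero for absolutely continuous distributions, so the choice is harmless). Taking expectations of both sides splits $\mathbb{E}[\max(A,B)]$ into two terms, $\mathbb{E}[A\,\mathbbm{1}[A \geq B]]$ and $\mathbb{E}[B\,\mathbbm{1}[B > A]]$, which by symmetry can be handled in the same way.

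Next I would expand the first of these expectations as a double integral against the joint density. Invoking independence to factor the joint density as $f_A(a)f_B(b)$, and then applying Fubini to integrate out $b$ first over $(-\infty, a]$, the inner integral collapses to $F_B(a)$:
\begin{equation*}
\mathbb{E}[A\,\mathbbm{1}[A \geq B]] = \int_{-\infty}^{\infty}\!\int_{-\infty}^{a} a\,f_A(a)f_B(b)\,db\,da = \int_{-\infty}^{\infty} a\,f_A(a)F_B(a)\,da.
\end{equation*}
The symmetric manipulation on the second term produces $\int_{-\infty}^{\infty} b\,f_B(b)F_A(b)\,db$, and summing the two gives exactly~\eqref{eq:expec_maxAB}. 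A consistency check to confirm the form: adding the two integrals and using Theorem~\ref{Theorem_EDPPC} should recover $\mathbb{E}[A]+\mathbb{E}[B]-\mathbb{E}[\min(A,B)]$, since $F_A(a)+(1-F_A(a))=1$ splits each integrand into the marginal mean plus a term matching $-\mathbb{E}[\min(A,B)]$.

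I do not anticipate a serious obstacle here; the only delicate point is the independence assumption, which is not stated explicitly in the lemma but is needed to factor the joint density. Since the lemma is invoked only for the physically independent distances $d_{k,k}$ and $d_{0,k}$, I would either add an explicit independence hypothesis in the proof preamble or note that the identity holds whenever the joint density factorizes on the sets $\{A\geq B\}$ and $\{B>A\}$, which covers the EDPPC application.
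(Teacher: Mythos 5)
Your proposal is correct and follows essentially the same route as the paper: the paper likewise writes $\mathbb{E}[\max(A,B)]$ as a double integral against the product density $f_A(a)f_B(b)$, splits the integration region according to which variable dominates, and integrates out the other variable to produce the cdf factors $F_B(a)$ and $F_A(b)$. Your explicit flagging of the independence hypothesis is a worthwhile refinement --- the paper uses it silently by writing the joint density as $f_A(a)f_B(b)$ --- but it does not change the substance of the argument.
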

\begin{proof}
See Appendix~\ref{proof:LemmaEDPPC}.
\end{proof}
\begin{corollary}\label{Corollary:EDPPC}
Using the distribution functions of $d_{k,k}$ and $d_{0,k}$, the moments of the transmit power $p_k$ are given by
\begin{equation}\label{equ:Epk2}
\mathbb{E}_{d_{k,k}}\left[p_k^{2/\alpha} \right]=\left\{
\begin{array}{c l}
 \frac{R_{\mathrm{C}}^2 V^{2/\alpha}}{2}- \frac{ R_{\mathrm{C}}^4 V^{4/\alpha}}{6R_{\mathrm{D}}^2 U^{2/\alpha}} & \text{if}\: R_{\mathrm{D}}^2 U^{2/\alpha} > R_{\mathrm{C}}^2 V^{2/\alpha}\\
 \frac{R_{\mathrm{D}}^2 U^{2/\alpha}}{2}- \frac{ R_{\mathrm{D}}^4 U^{4/\alpha}}{6R_{\mathrm{C}}^2 V^{2/\alpha}} & \text{if}\: R_{\mathrm{D}}^2 U^{2/\alpha} \leq R_{\mathrm{C}}^2 V^{2/\alpha}.
\end{array}
\right.
\end{equation}
\end{corollary}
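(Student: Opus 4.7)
The plan is to reduce the corollary to a computation of $\mathbb{E}[\min(A,B)]$ for two independent uniform random variables, and then apply Theorem~\ref{Theorem_EDPPC} together with Lemma~\ref{Lemma_EDPPC}. First I would observe that because $2/\alpha>0$ and both arguments of the $\min$ in~\eqref{equ:MDPPC} are nonnegative, the exponent can be pulled inside:
\[
p_k^{2/\alpha}=\min\bigl\{U^{2/\alpha}\,d_{k,k}^{2},\,V^{2/\alpha}\,d_{0,k}^{2}\bigr\}.
\]
Introducing $A=U^{2/\alpha}d_{k,k}^{2}$ and $B=V^{2/\alpha}d_{0,k}^{2}$, the job becomes to evaluate $\mathbb{E}[\min(A,B)]$. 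Since the D2D receiver and the D2D transmitter are uniformly distributed in disks of radii $R_\mathrm{D}$ and $R_\mathrm{C}$ centered at the transmitter and at the eNB respectively, equation~\eqref{equ:d00} applied to both links shows that $d_{k,k}^{2}$ is uniform on $[0,R_\mathrm{D}^{2}]$ and $d_{0,k}^{2}$ is uniform on $[0,R_\mathrm{C}^{2}]$. Hence $A\sim\mathcal{U}(0,a)$ and $B\sim\mathcal{U}(0,b)$ with $a\triangleq U^{2/\alpha}R_\mathrm{D}^{2}$ and $b\triangleq V^{2/\alpha}R_\mathrm{C}^{2}$, and $A,B$ are independent because the two distance random variables come from independent positions.

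Next I would apply Theorem~\ref{Theorem_EDPPC} to write
\[
\mathbb{E}[\min(A,B)]=\tfrac{a}{2}+\tfrac{b}{2}-\mathbb{E}[\max(A,B)],
\]
and compute $\mathbb{E}[\max(A,B)]$ from Lemma~\ref{Lemma_EDPPC}. The two cases of the corollary correspond to the two orderings of $a$ and $b$, so I would treat the case $a\le b$ (that is $R_\mathrm{D}^{2}U^{2/\alpha}\le R_\mathrm{C}^{2}V^{2/\alpha}$) and then invoke symmetry for the other case. In that case $f_A(x)=1/a$ on $[0,a]$ with $F_A(x)=x/a$ for $x\le a$ and $F_A(x)=1$ for $x>a$, and similarly for $B$ on $[0,b]$. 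Splitting the second integral in~\eqref{eq:expec_maxAB} at $x=a$ to account for the fact that $F_A$ saturates, one gets
\[
\mathbb{E}[\max(A,B)]=\int_{0}^{a}\!\tfrac{x}{a}\tfrac{x}{b}\,dx+\int_{0}^{a}\!\tfrac{x}{b}\tfrac{x}{a}\,dx+\int_{a}^{b}\!\tfrac{x}{b}\,dx=\tfrac{2a^{2}}{3b}+\tfrac{b^{2}-a^{2}}{2b}=\tfrac{b}{2}+\tfrac{a^{2}}{6b}.
\]
Substituting back yields $\mathbb{E}[\min(A,B)]=\tfrac{a}{2}-\tfrac{a^{2}}{6b}$, which, after replacing $a$ and $b$ by their definitions, matches the second branch of~\eqref{equ:Epk2}. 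The first branch follows by swapping the roles of $(A,a)$ and $(B,b)$ in the argument above.

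The only real obstacle is the careful handling of the piecewise cdfs in Lemma~\ref{Lemma_EDPPC}: the two uniforms have different supports, so the integral of $x\,f_B(x)F_A(x)\,dx$ must be split at the smaller endpoint $x=a$, where $F_A$ transitions from $x/a$ to the constant value $1$. Once this bookkeeping is done, everything else is elementary polynomial integration and algebraic simplification, and the symmetry between the two cases means only one of them needs to be written out in full.
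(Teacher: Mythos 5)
Your proposal is correct and follows essentially the same route as the paper: both reduce the computation to $\mathbb{E}[\min(A,B)]$ for the uniform random variables $A=U^{2/\alpha}d_{k,k}^{2}$ and $B=V^{2/\alpha}d_{0,k}^{2}$, invoke Theorem~\ref{Theorem_EDPPC} and Lemma~\ref{Lemma_EDPPC}, split the integral of $x f(x)F(x)$ at the smaller support endpoint where the cdf saturates, and obtain the second case by symmetry. Your explicit justification that the squared distances are uniform (via~\eqref{equ:d00}) is a slightly cleaner articulation of what the paper calls the Jacobian transformation, but the argument is the same.
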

\begin{proof}
See Appendix~\ref{Proof:COr_EDPPC}.
\end{proof}

Under this power control scheme, it is noted that: 1) D2D UEs closer to the serving eNB (where $d_{0,k} < d_{k,k}$) normally cause a stronger uplink interference and thus their transmit powers are reduced, 2) D2D UEs closer to the cell edge can transmit at a higher power since their interference to the uplink cellular UE is dropped due to path-loss, and 3) D2D pairs with close proximity will be allocated less power than D2D pairs that are far apart.

The EDPPC scheme is summarized in the second part of Algorithm~\ref{alg:DPPC}.

\textit{Cellular Coverage Probability for EDPPC:} By substituting $\mathbb{E}_{d_{k,k}}\left[p_k^{2/\alpha}\right]$ obtained in ~(\ref{equ:Epk2}) into the derived expressions ~(\ref{equ:Cov_prob_Cell}), the cellular coverage probability for EDPPC can be obtained.

\textit{D2D Coverage Probability for EDPPC:} Using the same methodolgy as in Theorem~\ref{Theorem_EDPPC}, for $p_k=\min\{ U d_{k,k}^{\alpha}, V d_{0,k}^{\alpha}\}$, and using the moments of $p_k$ in~\eqref{equ:Epk2} and the pdf of $d_{k,k}$ and $d_{0,k}$, the D2D coverage probability in~\eqref{equ:Cov_prob_D2D_ana_app_origin} becomes
\begin{equation}
\begin{split}			
			\covprobD{\beta_k}\approx &\tfrac{e^{- \theta_k \left( \rho_\mathrm{rx}(1+\varepsilon)\right)^{-2/\alpha}}}{1+\left( \beta_k\tfrac{p_0}{ \rho_\mathrm{rx}(1+\varepsilon)}\right)^{2/\alpha} \left(512\RC/(45\pi^2)\right)^{-2}} \int_{0}^{\mu^{1/\alpha}\RC}\! \left(  \int_{0}^{x} \tfrac{2y}{\RD^2}dy \right) \tfrac{2\mu^{-2/\alpha}x}{\RC^2}dx \:\:+ \\
			&\int_{0}^{\RD}\! \left(\int_{0}^{y} \tfrac{\exp{\left(- \theta_k \left( x^{-\alpha}y^{\alpha}\left( \rho_\mathrm{rx}(1+\varepsilon)\right)^{-1}\right)^{2/\alpha} \right)}}{1+\left( \beta_k\tfrac{p_0}{ x^{\alpha}y^{-\alpha}\left( \rho_\mathrm{rx}(1+\varepsilon)\right)}\right)^{2/\alpha} (512\RC/(45\pi^2))^{-2}} \tfrac{2\mu^{-2/\alpha} x}{\RC^2}dx\right)\tfrac{2 y}{\RD^2}dy.
\end{split}\label{equ:MDPPC_PDCov}
\end{equation}

To validate our analysis for DPPC and EDPPC, we compare the derived analytical expressions with their corresponding simulated results for $\lambda \in \{2\times 10^{-5},5\times 10^{-5}\}$, $\tilde{\lambda}=0.5\lambda$, $M=2$, $\mu=0.0005$, $\varepsilon=0.5$, and $\alpha=4$. In Fig.~\ref{fig:DPPC_ANA_SIM} and Fig.~\ref{fig:MIN_ANA_SIM}, we validate the correctness of the analytical expressions for the cellular coverage probability of~\eqref{equ:Cov_prob_Cell} and D2D coverage probability of~\eqref{equ:Cov_prob_D2D_ana_app} and ~\eqref{equ:MDPPC_PDCov}, while using the derived expressions of $\mathbb{E}[p_k^{2/\alpha}]$ for DPPC and EDPPC in ~(\ref{equ:Epk}) and ~(\ref{equ:Epk2}), respectively. As shown in the plots, the curves of the proposed DPPC and EDPPC schemes match well with simulated results over the entire range of $\beta$.
\begin{figure*}[hbtp]
\centering
\subfigure[]{
    \includegraphics[width=0.5\textwidth]{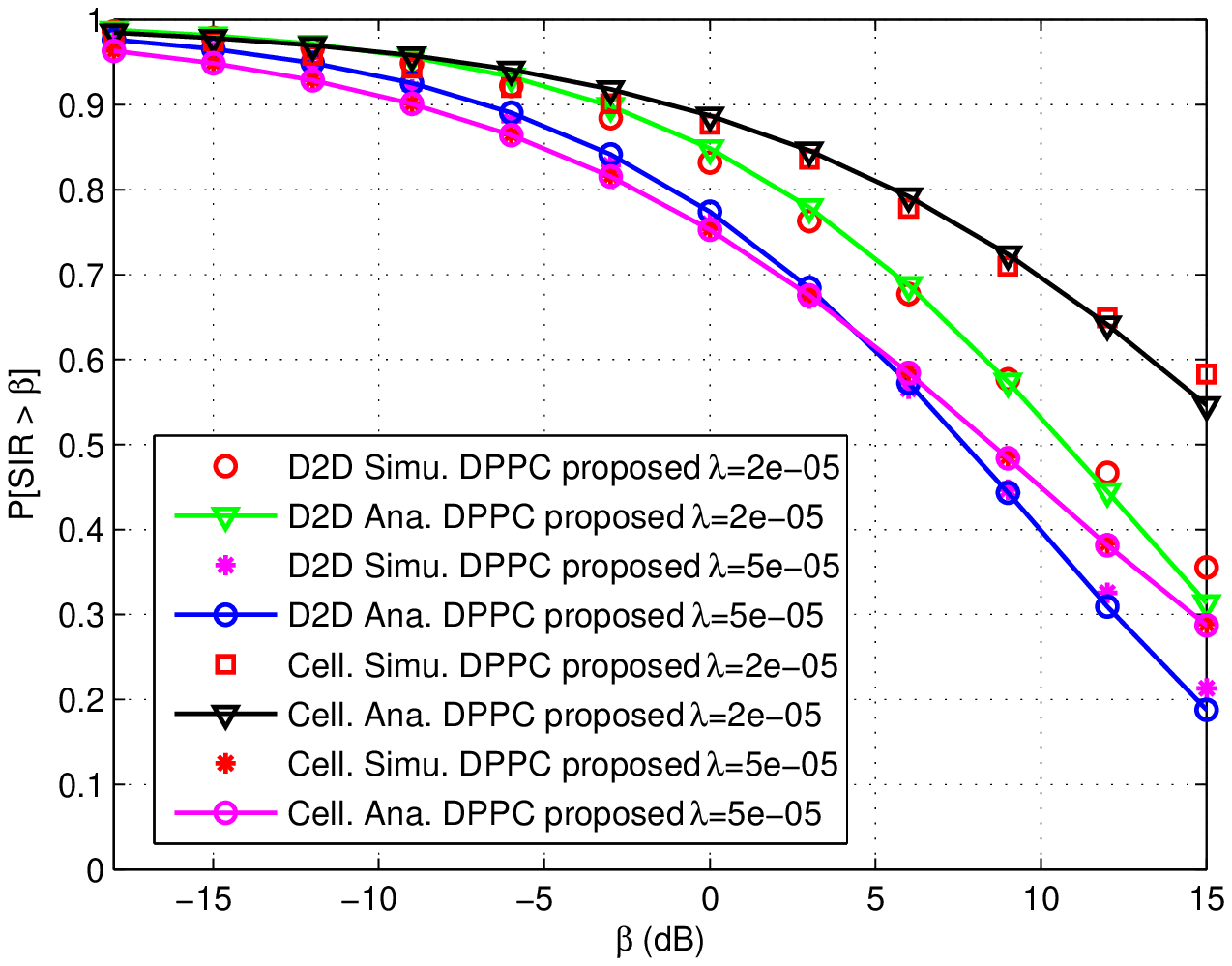}
	\label{fig:DPPC_ANA_SIM}}\vspace{-0.125in}\hspace{-0.5in}
\quad
	\subfigure[]{
	\includegraphics[width=0.5\textwidth]{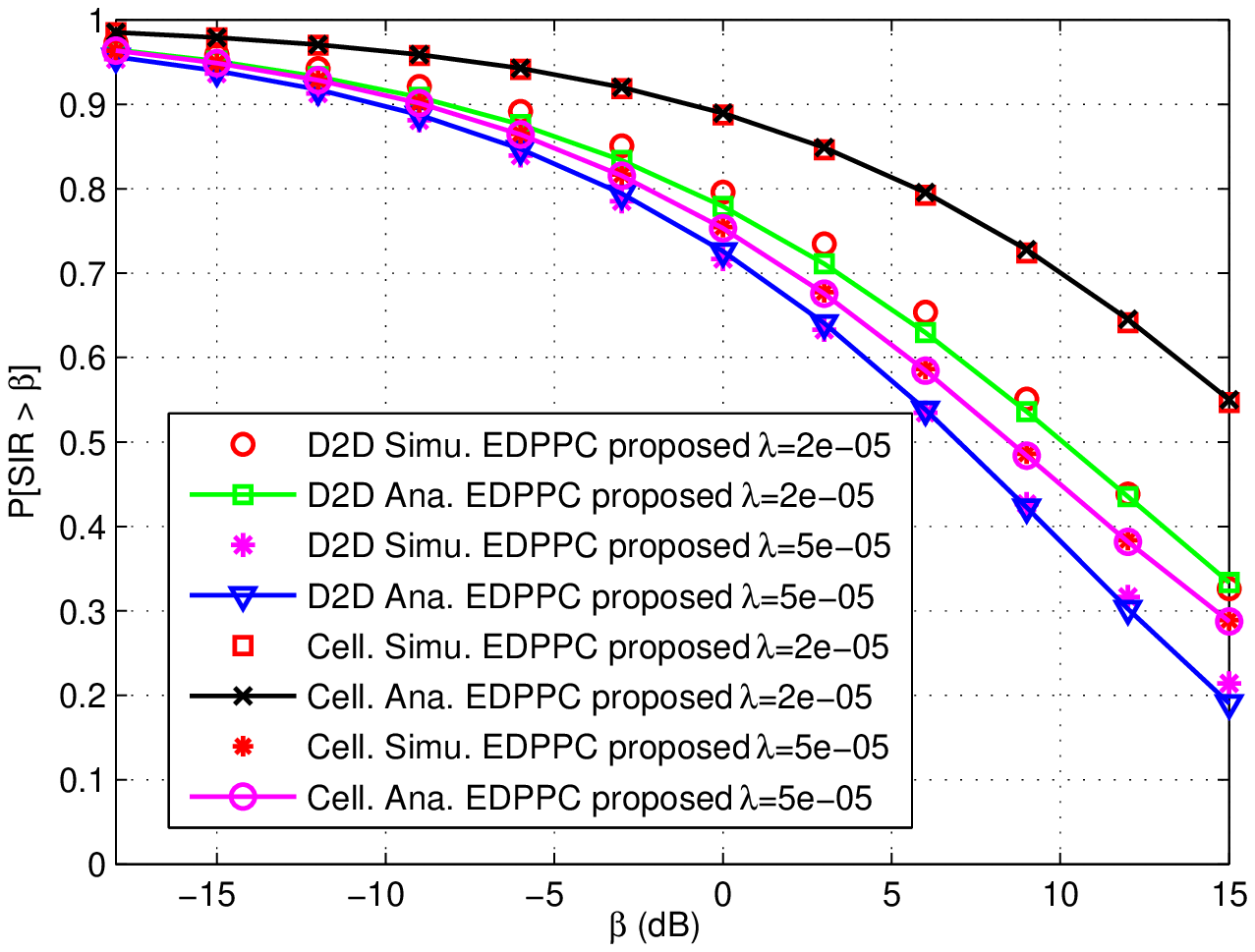}
\label{fig:MIN_ANA_SIM}}
\caption{Analytical vs. simulated coverage probability for cellular and D2D users using (a) DPPC, and (b) EDPPC scheme.}
\label{fig:SImfig}
\end{figure*}

%
\subsection{Proposed Soft Dropping Distance-based Power Control (SDDPC)}
The PC schemes proposed earlier provide a static power allocation where varying channel quality during D2D transmissions is not taken into consideration. An adaptive PC with variable target SINR would be an attractive approach to guard cellular and D2D communications against mutual interference and maintain good link quality. We propose a soft dropping distance-based PC (SDDPC) scheme that gradually decreases the target SINR as the required transmit power increases. This increases the probability of finding a feasible solution for the PC problem in which the target SINR values for all co-channel links can be achieved. Hence, links with bad quality, where the receiver is far from the transmitter and requires higher power, would target \emph{lower} SINR values. On the other hand, links with better quality, where the receiver is near the transmitter and requires lower power, would target \emph{higher} SINR values.

In the SDDPC scheme, each UE iteratively varies its transmit power so that a power vector $\textbf{p}$ for all UEs in the system is found such that the $\text{SINR}_{k}$ of the $\nth{k}$ UE satisfies
\begin{equation*}
 \text{SINR}_{k} (K',\textbf{p}) \geq\beta_{k}(d_{k,k}),
\end{equation*}
where $\beta_{k}(d_{k,k})$ is the target SINR of the $\nth{k}$ UE that varies according to the distance between the D2D pairs $d_{k,k}$. The SDDPC scheme uses a target SINR that varies between a maximum value $\beta_{\text{max}}$ and a minimum $\beta_{\text{min}}$ as the distance between the D2D pairs varies between $R_{\min,\mathrm{D}}$ and a maximum value $R_{\mathrm{D}}$, while satisfying a power constraint of $P_{\min,\mathrm{D}} \leq p_k \leq P_{\max,\mathrm{D}}$.

The target SINR $\beta_{k}(d_{k,k})$  of the $\nth{k}$ D2D UE at TTI $(t)$ is given according to
\begin{equation}\label{Gamma}
\beta_{k}(d_{k,k})=\left\{
\begin{array}{c l}
\beta_{\text{max}} & \text{if}\: d_{k,k}^{(t)} \leq R_{\min,\mathrm{D}}\\
\beta_{\text{max}} \left( \dfrac{d_{k,k}^{(t)}}{R_{\min,\mathrm{D}}} \right)^\upsilon & \text{if}\:  R_{\min,\mathrm{D}}< d_{k,k}^{(t)}< R_{\mathrm{D}}\\
\beta_{\text{min}} & \text{if}\: d_{k,k}^{(t)} \geq R_{\mathrm{D}},
\end{array}
\right.
\end{equation}
where $\upsilon=\tfrac{\log_{10}(\beta_{\text{min}}/ \beta_{\text{max}})}{\log_{10}(R_{\mathrm{D}}/R_{\min,\mathrm{D}})}.$

Furthermore, the power of each D2D transmitter is updated with every transmission as
\begin{equation}\label{equ:updatepk}
p_{k}^{(t+1)}=p_{k}^{(t)}\left(\dfrac{\beta_{k}(d_{k,k}^{(t)})}{\text{SINR}_{k}(K,\textbf{p}^{(t)})}\right)^\eta,
\end{equation}
where $\eta$ is a control parameter given by $(1-\upsilon)^{-1}$~\cite{yates1997soft}. Finally, the achieved power $p_{k}^{(t+1)}$ is constrained as follows
\begin{equation*}
p_{k}^{(t+1)} = \min\lbrace P_{\max,\mathrm{D}},\max\lbrace p_{k}^{(t+1)}, P_{\min,\mathrm{D}} \rbrace \rbrace.
\end{equation*}
The SDDPC scheme is a distributed approach and the target SINR $(\beta_{k}(d_{k,k}))$ depends on the distance between the D2D pair; therefore, decision making is done by the D2D users themselves. In particular, the D2D receivers can use the sidelink control channel (e.g., Physical Sidelink Control Channel (PSCCH)) as per the LTE technical specification in 3GPP TS 36.331~\cite{3gpp331} to report back to the corresponding D2D transmitter the received SINR value and the distance based path-loss $d_{k,k}$ whenever the received SINR is below the target value.

The SDDPC scheme is summarized in Algorithm~\ref{Alg:2SDDPC}.\\

\begin{algorithm}
\caption{Dynamic Distributed Power Control}\label{Alg:2SDDPC}
\small
\begin{algorithmic}
\Procedure{SDDPC}{}
\State $ p_k^{(t)} \gets P_{\min,\mathrm{D}}= \rho_\mathrm{rx} R_{\min,\mathrm{D}}^{\alpha}(1+\varepsilon)$
\State Calculate $\beta_{k}(d_{k,k})$ according to ~(\ref{Gamma})
\If {$\text{SINR}_{k} (K,\textbf{p}) < \beta_{k}(d_{k,k})$}
\State   LOOP: \textbf{While} {$\text{SINR}_{k} (K,\textbf{p}) < \beta_{k}(d_{k,k}) $ \textbf{and} $p_{k}^{(t)} \neq P_{\max,\mathrm{D}}$} \textbf{do}
\State      \qquad\qquad $p_{k}^{(t+1)} \gets p_{k}^{(t)}\left(\dfrac{\beta_{k}(d_{k,k}^{(t)})}{\text{SINR}_{k}(K,\textbf{p}^{(t)})}\right)^\eta$
\State \qquad\qquad $p_{k}^{(t+1)} \gets \text{min}\lbrace P_{\max,\mathrm{D}},\text{max}\lbrace p_{k}^{(t+1)}, P_{\min,\mathrm{D}} \rbrace \rbrace$
\State      \qquad \qquad \textbf{goto} LOOP
\Else {\: $p_k^{(t+1)} \gets p_k^{(t)}$}
\EndIf
\State \textbf{end}
\EndProcedure
\end{algorithmic}
\end{algorithm}

%
\subsection{Discussion}\label{complexity}
On complexity and convergence of Algorithms~\ref{alg:DPPC} and~\ref{Alg:2SDDPC}, we note that Algorithm~\ref{alg:DPPC} is a non-iterative, low complexity algorithm $O(1)$, which requires around 4 simple computations. Convergence is not an issue since it is non-iterative. For Algorithm~\ref{Alg:2SDDPC}, the power allocated to the D2D users is chosen iteratively and in a non-decreasing manner. At each iteration, $p_k$ is increasing which increases $\text{SINR}_k$ until $\text{SINR}_k$ approaches the target $\beta_k$. Since the D2D TX has finite available power, the $\text{SINR}_k$ achieved by the proposed algorithm is also finite. 	For these reasons and following the same methodology as \cite{yates1997soft,yates1997soft2}, the proposed algorithm is guaranteed to converge to a finite $\text{SINR}_k$. The proof is similar to Theorem 3 in \cite{yates1997soft,yates1997soft2} and hence is omitted for brevity. Furthermore, figure ~\ref{fig:Iterations_SDDPC} shows the number of iterations needed in this algorithm that are very low. For instance, as $M$ increases, the number of D2D links $K'$, sharing the resources with one of the cellular users, decreases; therefore the interference level caused by the D2D users will decrease and hence increasing the $\text{SINR}_k$. This will cause Algorithm~\ref{Alg:2SDDPC} (SDDPC) to converge faster (for $M=3$, it requires an average of 3 iterations to converge).

\begin{figure}[t]
    \centering
    \includegraphics[scale=0.85]{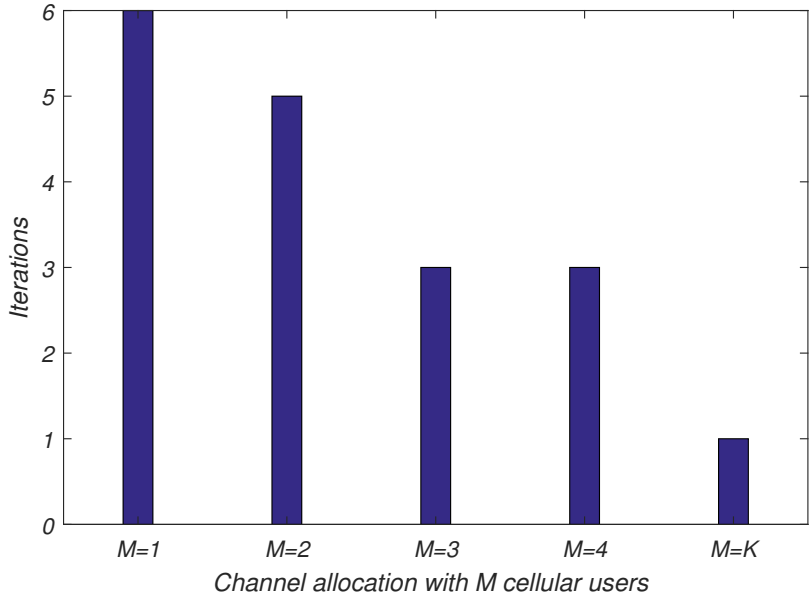}
    \caption{Number of iterations for the SDDPC scheme for $\lambda=5\times 10^{-5}$ and different $M$ channel allocations. }
\label{fig:Iterations_SDDPC}
\end{figure}

Moreover, Algorithms~\ref{alg:DPPC} and~\ref{Alg:2SDDPC} may not necessarily converge to the global optimal solutions. The development of global optimal power allocation is otherwise done in a centralized manner at the base station. However, it would require excessive signaling overhead in which the computational complexity grows exponentially with $K$ ~\cite{lee2015power,memmi2016power}. This excessive overhead is avoided in the distributed case, with graceful degradation in performance.

Furthermore, we note that using the two proposed static distributed PC schemes for link establishment, the allocated power remains constant over the resource blocks since we apply equal power allocation to all the assigned resource blocks. On the other hand, for link maintenance, SDDPC compensates the measured SINR at the receiver with a variable target SINR. The power allocated per PRB of each D2D UE is updated every transmission as per~\eqref{equ:updatepk}.

In order to realize the proposed PC schemes, each D2D transmitter needs to have knowledge of: 1) the distance based path-loss parameters $d_{k,k}^{\alpha}$ and $d_{0,k}^{\alpha} $ in order to allocate power, 2) the target SINR $\beta$, 3) the density of the D2D links $q\lambda$, and (4) CSI of the direct link. Knowledge of distance based path-loss $d_{k,k}^{\alpha}$ and $\beta$ can be acquired through feedback from the corresponding D2D receiver. During D2D link establishment~\cite{3gpp803}, the density of the D2D links (which is the average number of active D2D links per unit area) as well as $d_{0,k}^{\alpha}$ can be estimated at the eNB. The D2D transmitters acquire the density $q\lambda$ when the eNB broadcasts it using the downlink control channel, and acquire $d_{0,k}^{\alpha}$ through feedback from the eNB.

All D2D pairs can use the sidelink channels (Physical Sidelink Broadcast Channel (PSBCH) and PSCCH)~\cite{3gpp331} to transmit reference signals to enable D2D receivers to perform measurements and report them back to the eNB or to the corresponding D2D transmitter. Each D2D receiver can reliably estimate the distance based path-loss parameters using these signals by averaging the effects of fading over multiple resource blocks.

The eNB can also estimate distances through the location updates defined in 3GPP TS 23.303~\cite{3gpp303}, and the path-loss exponent can be estimated as per \cite{chandrasekharan2015propagation} through defining path-loss exponents based on the region of the D2D pairs location. The UE's location information exchanged is expressed in shapes as defined in 3GPP TS 23.032~\cite{3gpp032} as universal geographical area description (GAD).

%
\section{Simulation Results}\label{sec:simulation}
In this section, we provide numerical results for the D2D underlaid cellular network. First, we show how the estimation error margin ($\varepsilon$) and the PC control parameter ($\mu$) for DPPC and EDPPC affect the coverage probability for the cellular and the D2D links. Then, we show the performance gains of using the proposed CA and PC schemes (compared to the on/off PC in \cite{lee2015power}) in terms of coverage probability, spectral and energy efficiency.

%
\subsection{Simulation Setup}
Figure~\ref{fig:Model_sparse} shows a snap shot depicting the geometry of a typical cell. The eNB is located at the center position $(0,0)$ and the uplink users are uniformly located within a radius $\RC$. The D2D transmitters are located according to a PPP distribution with $\lambda \in \left\lbrace 2\times10^{-5},5 \times10^{-5} \right\rbrace $ in a ball centered at the eNB and radius $\RC+\unit[250]{ m}$. The system parameters used throughout the experimental simulations are summarized in Table \ref{table:1}. Moreover, the transmit power of the cellular user is set as $p_0 = P_{\max,\mathrm{C}}$.
\begin{table}[t!]
\centering
\caption{Simulation Parameters}
\label{table:1}
\begin{tabular}{l|c}
 \hline
 \textbf{Parameter} & Value   \\\hline
    Cell radius ($\RC$) & $\unit[500]{m}$   \\\hline
    Max. D2D link range ($R_{\mathrm{D}}$)  &  $\unit[50]{m}$ \\\hline
    Min. D2D link range ($R_{\min,\mathrm{D}}$)  &  $\unit[5]{m}$  \\\hline
    D2D link density ($\lambda$) & $2 \times10^{-5}$ and $5 \times10^{-5}$ \\\hline
    Average $\#$ D2D links ($K$) & $\mathbb{E}\left[K \right] = \pi\lambda \RC^2 \in \left\lbrace 15, 39\right\rbrace $ \\\hline
    Path-loss exponent ($\alpha$) & 4 \\\hline
    Target SINR threshold ($\beta$) & varies from $\unit[-18]{dB}$ to $\unit[18]{dB}$ \\\hline
    Max. TX power of cellular user \cite{memmi2016power}& $P_{\max,\mathrm{C}} =  \unit[100]{mW}$ \\\hline
    Max. TX power of D2D user \cite{lee2015power} & $P_{\max,\mathrm{D}} =  \unit[0.1]{mW}$ \\\hline
    Min. TX power of D2D user & $P_{\min,\mathrm{D}} =  \unit[0.2]{\mu W}$ \\\hline
    Estimation margin $\varepsilon$ & 0.5 \\\hline
	Channel quality threshold for EDPPC $G_{\min}$ &  $\unit[-40]{dbm}$ \\\hline
	PC parameter $\mu$ for EDPPC & 0.0005\\\hline
    Receiver sensitivity $\rho_\mathrm{rx}$ & $\rho_\mathrm{rx}=P_{\max,\mathrm{D}} {R_{\mathrm{D}}}^{-\alpha} $\\\hline
    Noise variance ($\sigma^2$) &  $\unit[-112.4]{dBm}$    \\\hline
    Monte-Carlo Simulations & 1000\\\hline
    TTI &  $\unit[1]{ms}$  \\\hline
\end{tabular}
\end{table}

\begin{figure*}[t!]
\centering
\subfigure[]{
\includegraphics[scale=0.45]{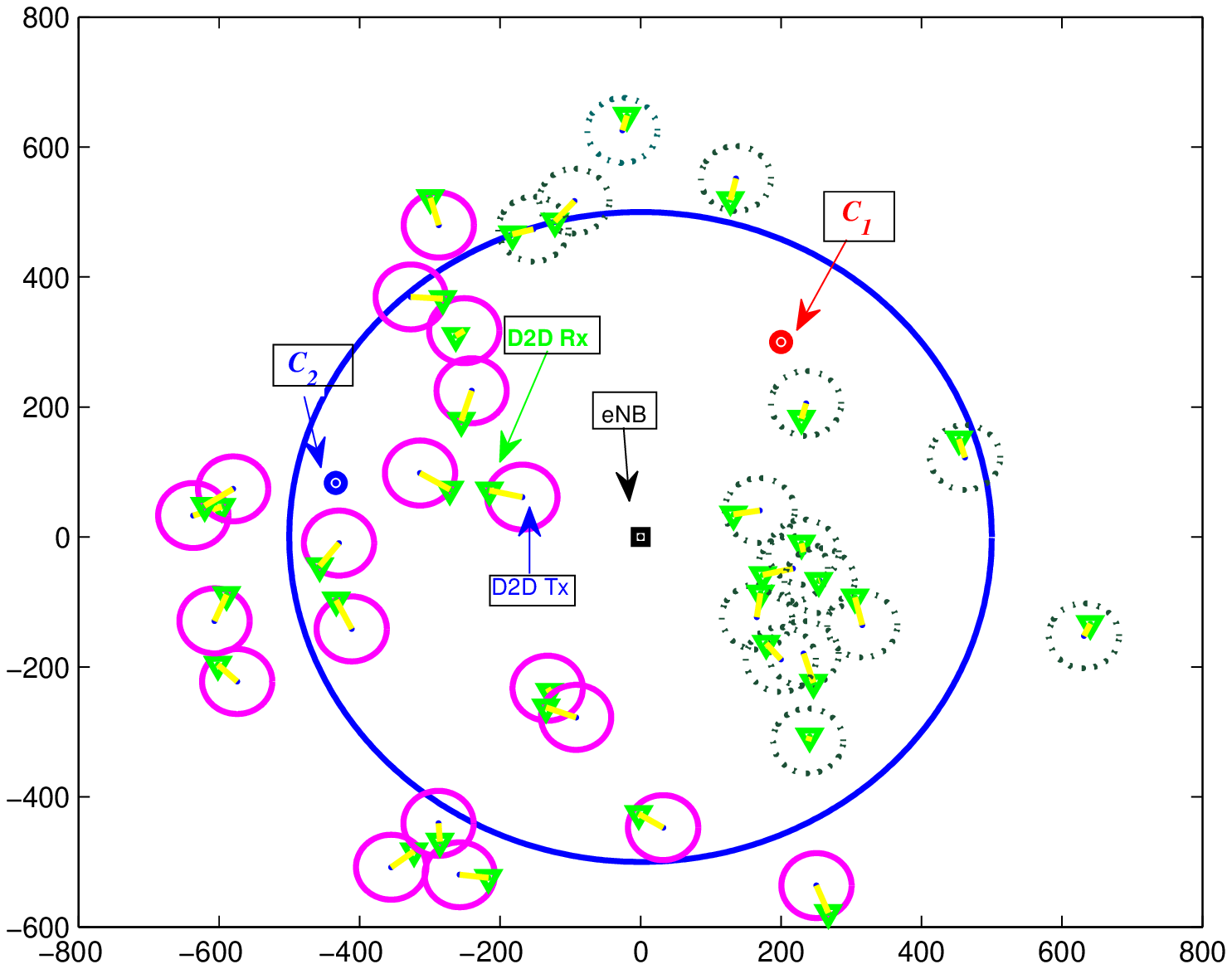}
\label{fig:Model_sparse}}\hspace{-0.24in}\vspace{-0.1in}
\subfigure[]{
\includegraphics[scale=0.5]{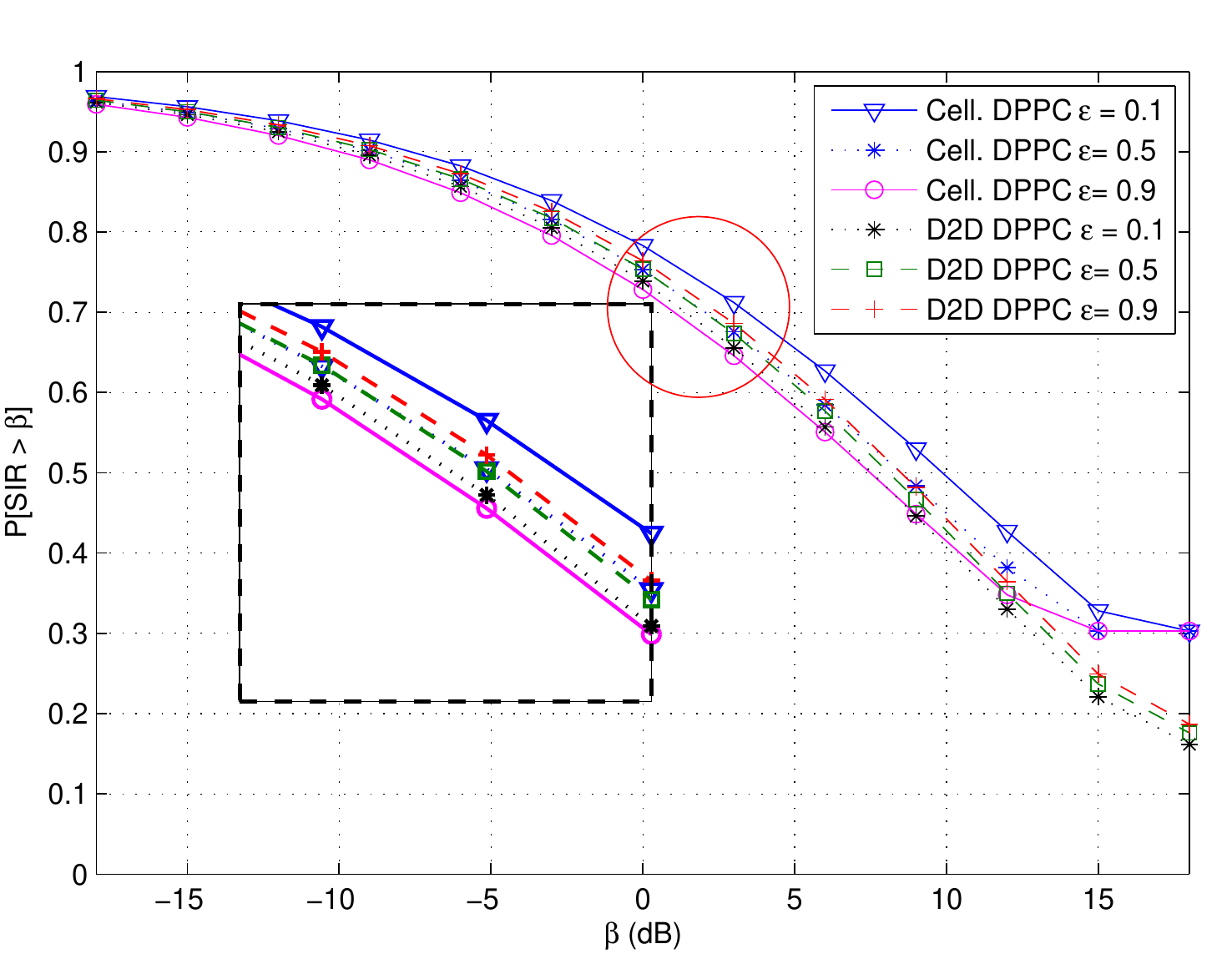}
\label{fig:DPPC_variable_epsilon2}}
\quad
\subfigure[]{
\includegraphics[scale=0.5]{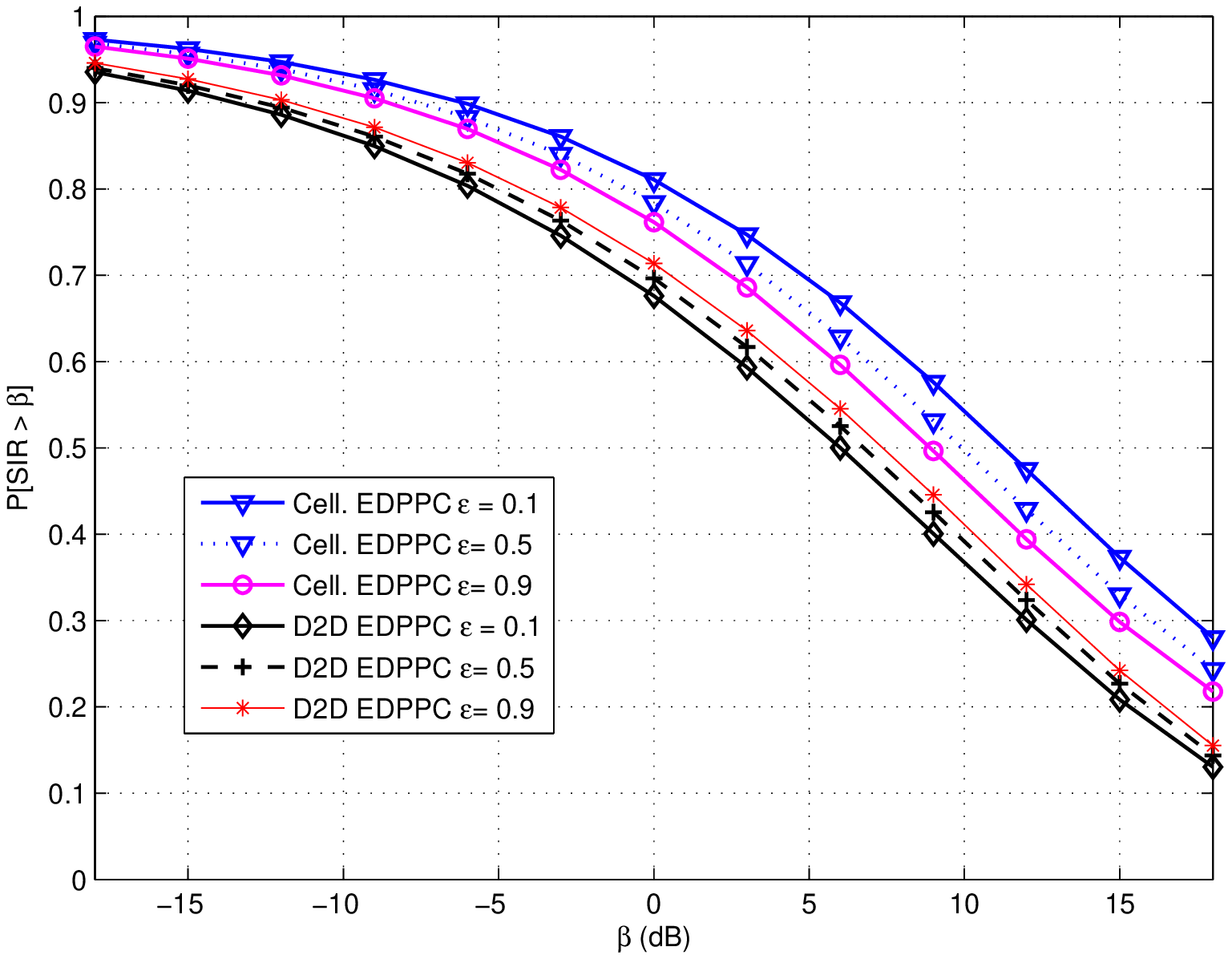}
\label{fig:EDPPC_variable_epsilon}}\hspace{-0.35in}\vspace{-0.1in}
\quad
\subfigure[]{
\includegraphics[scale=0.52]{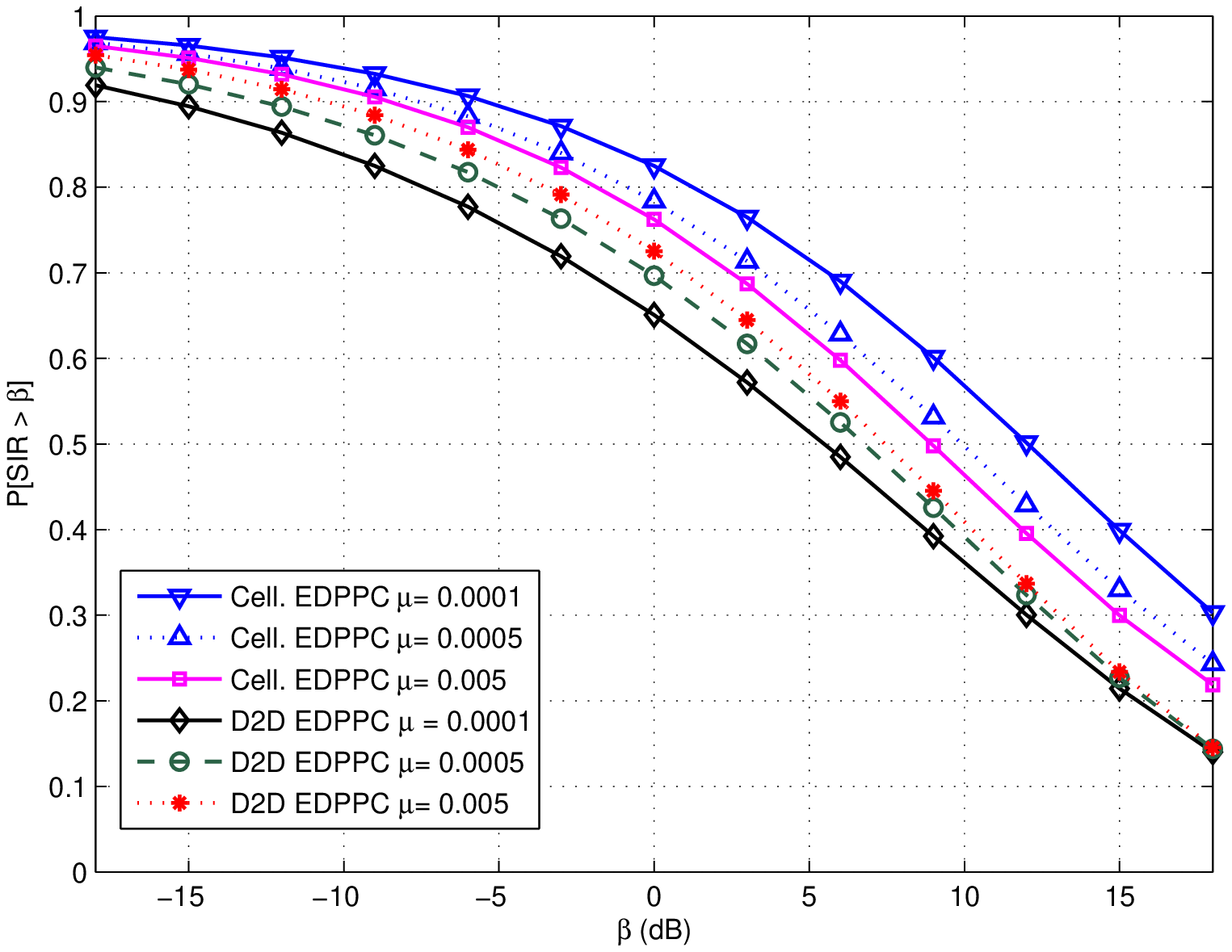}
\label{fig:EDPPC_variable_mu}}
\caption{(a) A snapshot of link geometry for a D2D underlaid cellular network assuming a sparse D2D link deployment scenario (i,e., $\lambda = 2 \times10^{-5}$). D2D links in circles share resources with CUE $c_1$, while D2D links in dashed circles share resources with CUE $c_2$. (b) Coverage probability for cellular and D2D users where resources are shared with 2 CUEs, using the proposed DPPC with variable $\varepsilon$. (c) Same as (b) but using the proposed EDPPC scheme. (d) Using the proposed EDPPC with variable $\mu$.}
\label{fig:globfig}
\end{figure*}

\begin{figure*}[t!]
\centering
\subfigure[]{
\includegraphics[scale=0.6]{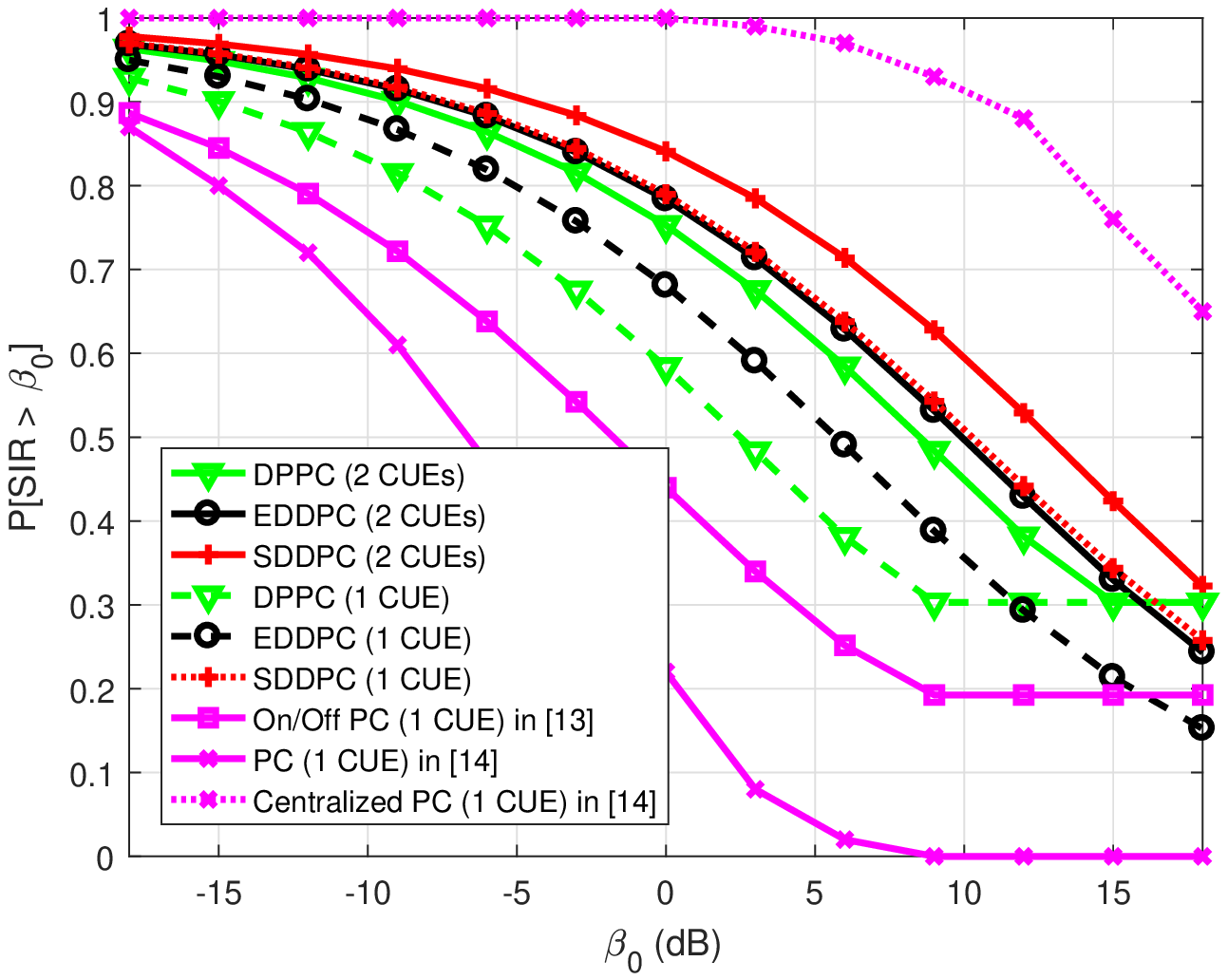}
\label{fig:Cell_Dense_All}}\hspace{-0.2in}\vspace{-0.1in}
\quad
\subfigure[]{
\includegraphics[scale=0.6]{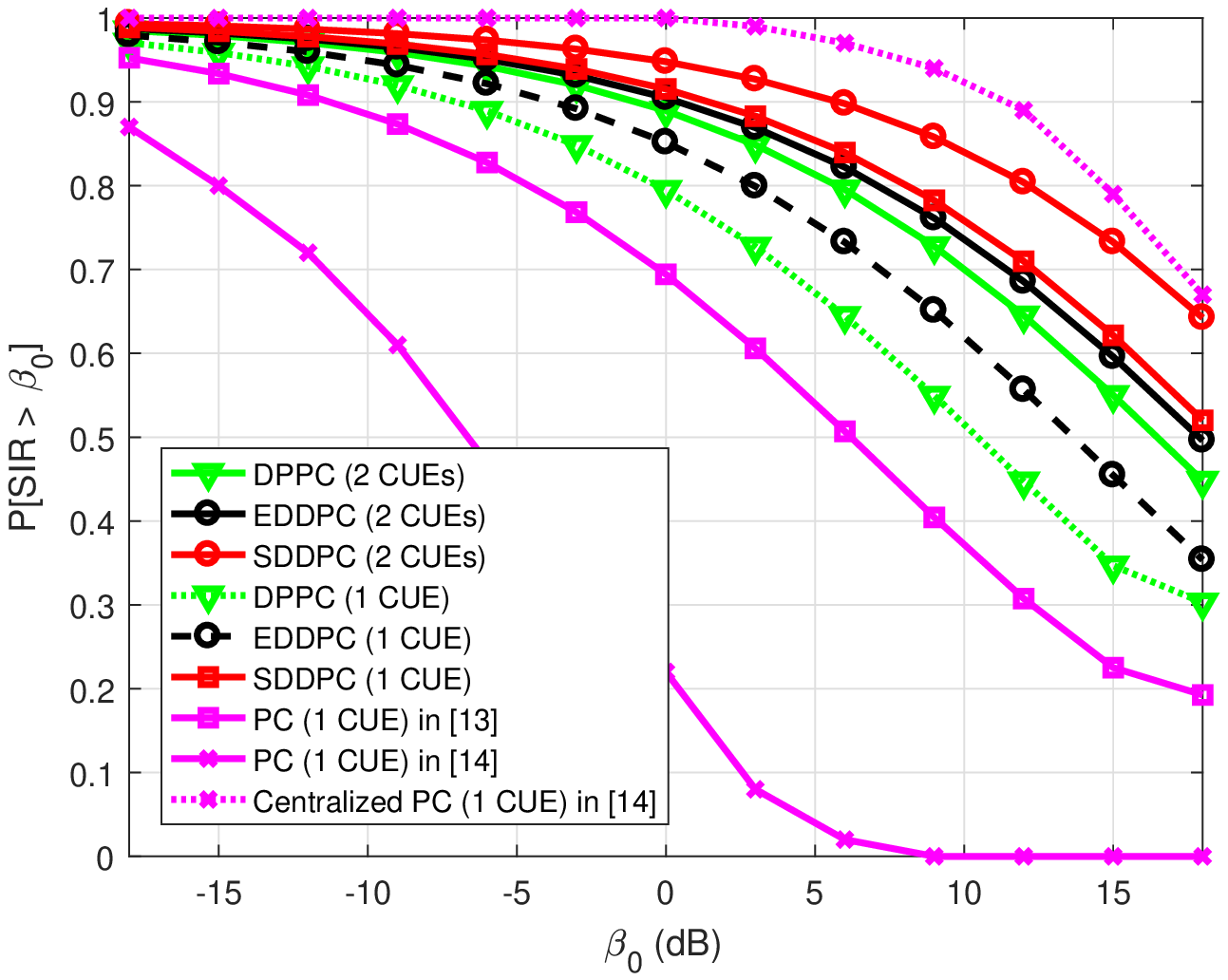}
\label{fig:Cell_Sparse_All}}
\bigskip
\subfigure[]{
\includegraphics[scale=0.6]{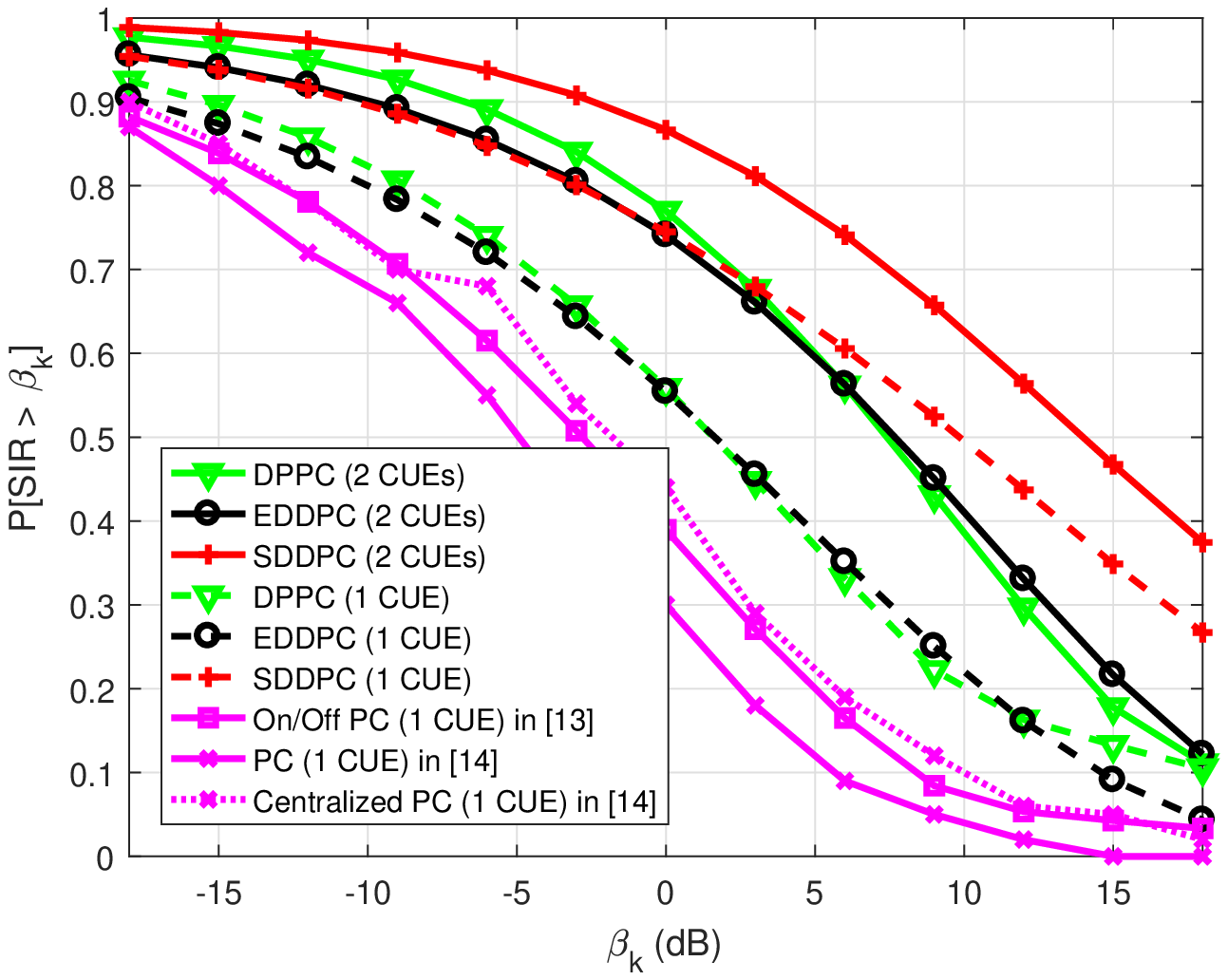}
\label{fig:D2D_Dense_All}}\vspace{-0.3in}\hspace{-0.2in}\vspace{-0.05in}
\quad
\subfigure[]{
\includegraphics[scale=0.6]{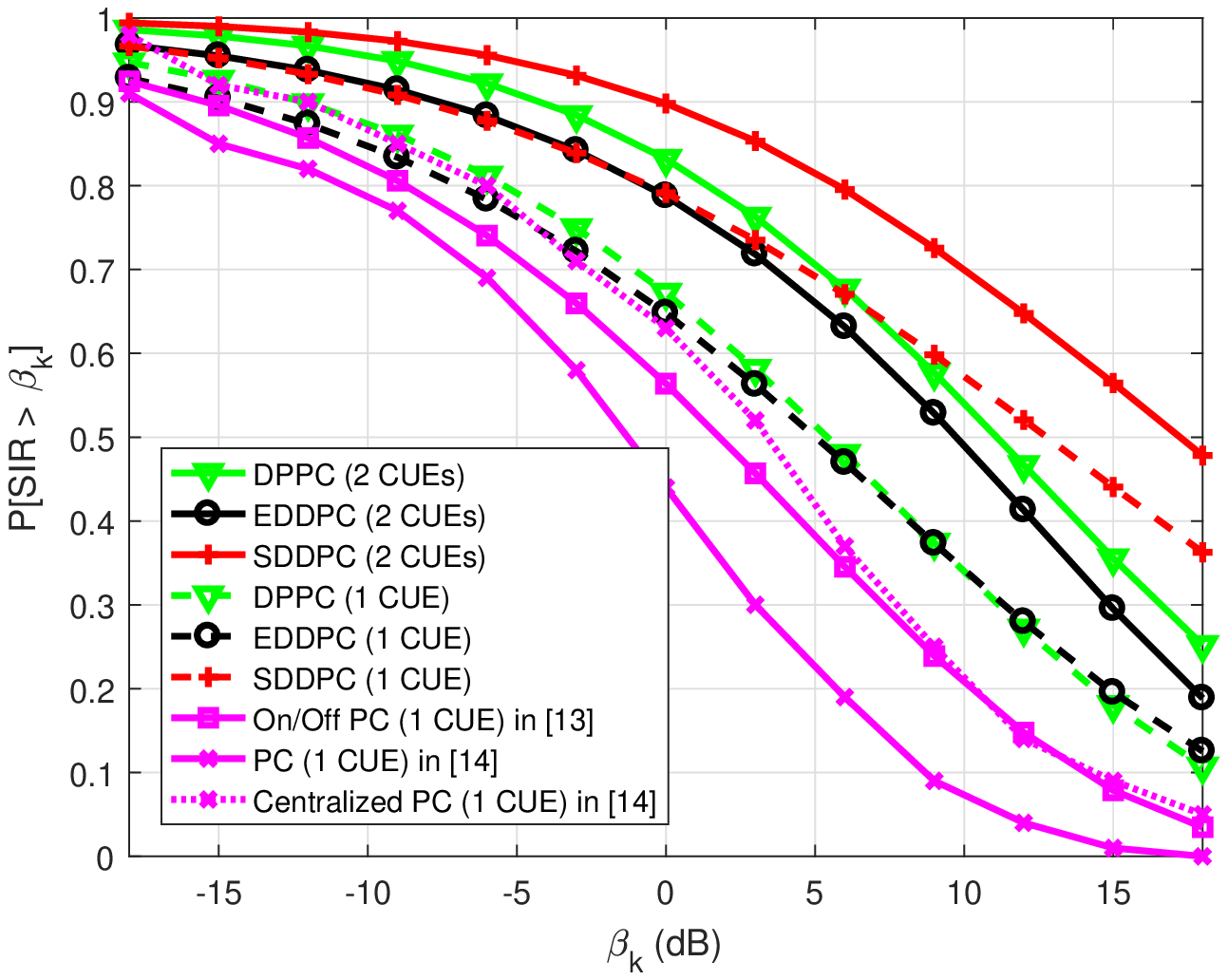}
\label{fig:D2D_Sparse_All}}
\caption{Coverage probability for cellular and D2D users using all the proposed PC schemes in this work vs. that of \cite{lee2015power,memmi2016power}: (a) For cellular users in dense network scenario, (b) for cellular users in sparse network scenario, (c) for D2D users in dense network scenario, and (d) for D2D users in sparse network scenario.}
\label{fig:globfig2}
\end{figure*}

%
\subsection{Coverage Probability for DPPC and EDPPC with Variable Parameters}
In a dense D2D link deployment scenario, the average number of D2D links in the cell is $\mathbb{E}\left[K \right]=39$  and the average number of D2D links sharing resources with one of the two cellular users is $\mathbb{E}\left[K' \right]=20$. For the case of variable $\varepsilon$ for both DPPC and EDPPC, we plot the cellular and D2D coverage probability in Figs.~\ref{fig:DPPC_variable_epsilon2} and ~\ref{fig:EDPPC_variable_epsilon}. As shown in the figures, as the error margin varies from 0.1 to 0.9 ($\mu=0.0005$), the cellular coverage probability decreases while the D2D coverage probability increases. D2D users allocate more power to enhance the D2D link, thus causing more interference to the cellular users. In addition, it is noted in Fig.~\ref{fig:DPPC_variable_epsilon2} for DPPC that no D2D link is dropped when $\beta< \unit[10]{dB}$, since the transmit probability $\txprob=\min\left\{ \tfrac{2\text{sinc}(\: 2/\alpha\:)}{\pi q\lambda\: \beta^{2/\alpha} \: {R_{\mathrm{D}}^2}},1\right\}=1$. However, when $\beta> \unit[10]{dB}$, the transmit probability is activated where $\txprob\neq 1 \vspace{2pt}$, and some D2D links are dropped thus reducing the D2D interference and enhancing the link coverage for D2D and cellular transmitters.

In Fig.~\ref{fig:EDPPC_variable_mu}, we vary the control parameter $\mu$ for $\varepsilon=0.5$ using the EDPPC scheme. As $\mu$ decreases from 0.005 down to 0.0001, the cellular coverage probability increases and D2D coverage probability decreases. Hence D2D links are dropped according to $\mu$ so that they do not cause excessive interference to cellular users. Furthermore, the remaining D2D users will allocate less power, thus the interference at the cellular users and at the other D2D users will be even more diminished. Therefore, the proposed scheme can effectively protect cellular users from interference caused by the D2D users.

%
\subsection{Cellular Coverage Probability for all PC schemes}
In Figs. ~\ref{fig:Cell_Dense_All} and ~\ref{fig:Cell_Sparse_All}, we plot the coverage probability of the cellular links using our proposed schemes for two scenarios where the D2D links share the resources with one and two cellular users in dense and sparse networks. We also compare the results with that of 1) the on/off PC scheme in~\cite{lee2015power}, which are the same results as in~\cite{memmi2016power} for the best case scenario with zero channel uncertainty, and 2) the on/off PC scheme in~\cite{memmi2016power} with channel uncertainty factor of 0.5. It can be seen that all the proposed schemes outperform the scheme in \cite{lee2015power,memmi2016power}. In particular for the case of 2 CUEs, SDDPC increases the coverage probability by more than $40 \%~ (45\%)$ in dense (sparse) networks compared to~\cite{lee2015power,memmi2016power} for the entire range of $\beta_0$. The EDPPC scheme performs better than DPPC due to the extra $d_{0,k}^{-\alpha}$ parameter that further reduces the interference at the eNB. However, SDDPC outperforms the other PC schemes as it protects the cellular links using the adaptive approach.

As expected, the cellular coverage probability increases when D2D users share resources with multiple cellular users. The reason is that a smaller number of D2D links share the same resources with a particular CUE, which results in a reduction in the interference caused by the D2D transmissions.

In addition, one can note that the centralized power control \cite{memmi2016power} achieves nearly perfect cellular user coverage probability performance in the low target SINR values, at high cost of system complexity as discussed in Section.~\ref{complexity}.

%
\subsection{D2D Coverage Probability for all PC schemes}
Figures~\ref{fig:D2D_Dense_All} and~\ref{fig:D2D_Sparse_All} show the coverage probability of D2D links using the proposed PC schemes in dense and sparse network deployments. As shown, all proposed schemes outperform the schemes in \cite{lee2015power,memmi2016power}. On one hand, the coverage probability for SDDPC increases by up to $60\%~(50\%)$ for the dense (sparse) scenario. On the other hand, DPPC and EDPPC have approximately similar performance where the coverage probability increases by $40\%~ (30\%)$. However, SDDPC outperforms the other PC schemes, since the D2D links set variable target SINRs. For instance, links with good quality have high SINR target, while links with low quality have low SINR target.

Moreover, when D2D users share resources with more than one cellular user, the D2D coverage probability using our proposed PC schemes is significantly enhanced as the interference caused by the D2D transmission on other D2D users is reduced.

In general, the D2D coverage probability performance decreases in the dense scenario; however, the total number of successful D2D transmissions is larger than that of the sparse D2D link deployment scenario. For instance, when the target SINR is $\unit[0]{dB}$, the total number of successful D2D transmissions in both sparse and dense scenarios is $|A_\mathrm{D}|_{\text{sparse}}=\mathbb{E}\left[K \covprobD{\beta_k}\right] = 15\times 0.9 \approx 13$ and $|A_\mathrm{D}|_{\text{dense}}= 39\times 0.88 \approx 34$, respectively, using the proposed SDDPC scheme and resources are shared with 2 CUEs. The corresponding numbers of successful D2D transmissions from~\cite{lee2015power} are $|A_\mathrm{D}|_{\text{sparse}}= 15\times 0.58 \approx 8$ and $|A_\mathrm{D}|_{\text{dense}}= 39\times 0.4 \approx 15$, respectively. Therefore, a significant increase in the number of the D2D links is attained using the proposed SDDPC scheme.

In addition, one can note that the centralized power control \cite{memmi2016power} (with high signaling overhead and complexity) does not perfrom as well for the D2D case, since this approach works on maximizing the SINR of the uplink user and allows less D2D links to access the network through the admission control. However, with less complexity, our proposed schemes outperform the centralized approach.
\begin{figure*}[t]
\centering
\subfigure[]{
    \includegraphics[scale=0.55]{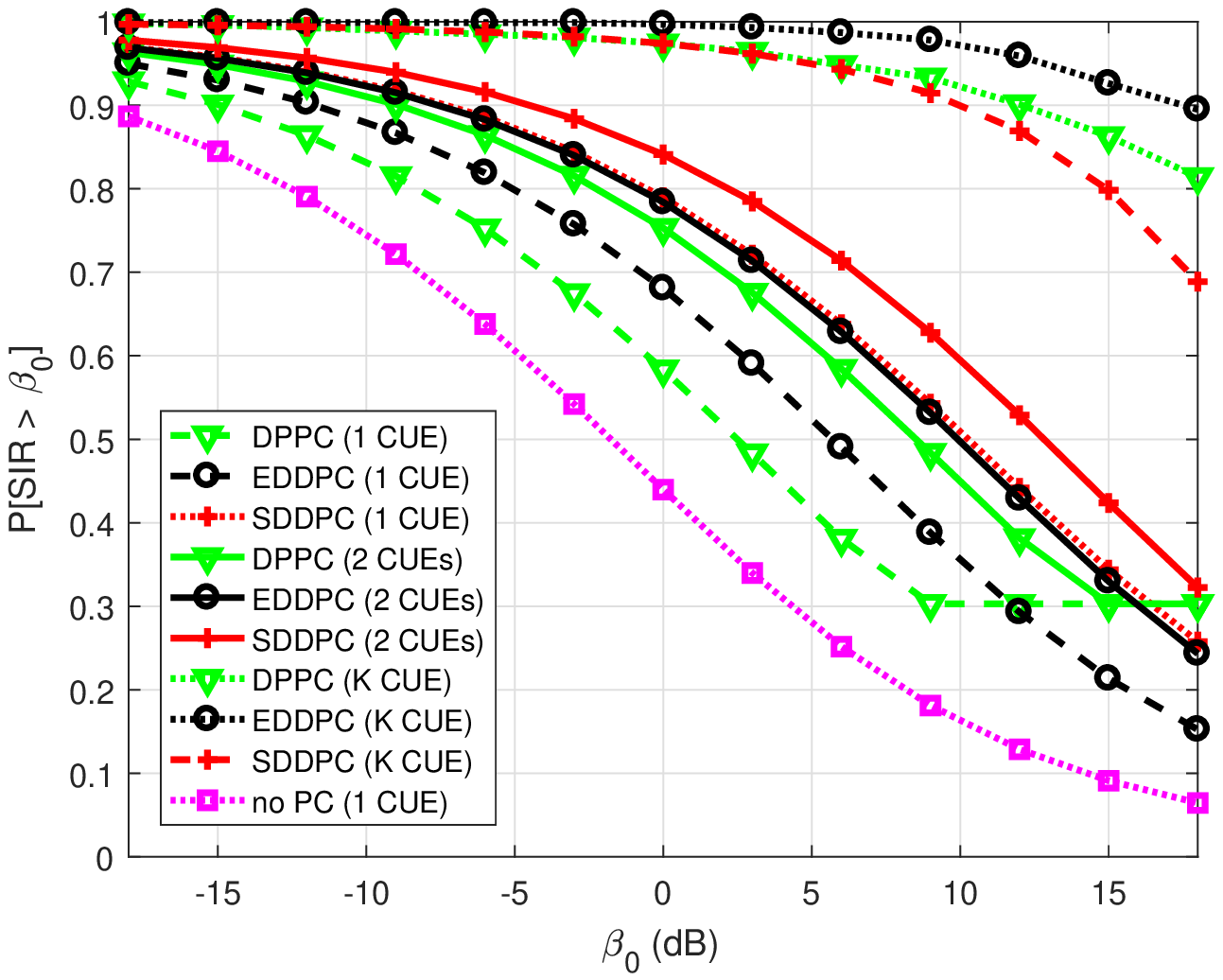}
	\label{fig:Cell_dense_ALL_M}}
\quad
	\subfigure[]{
	\includegraphics[scale=0.55]{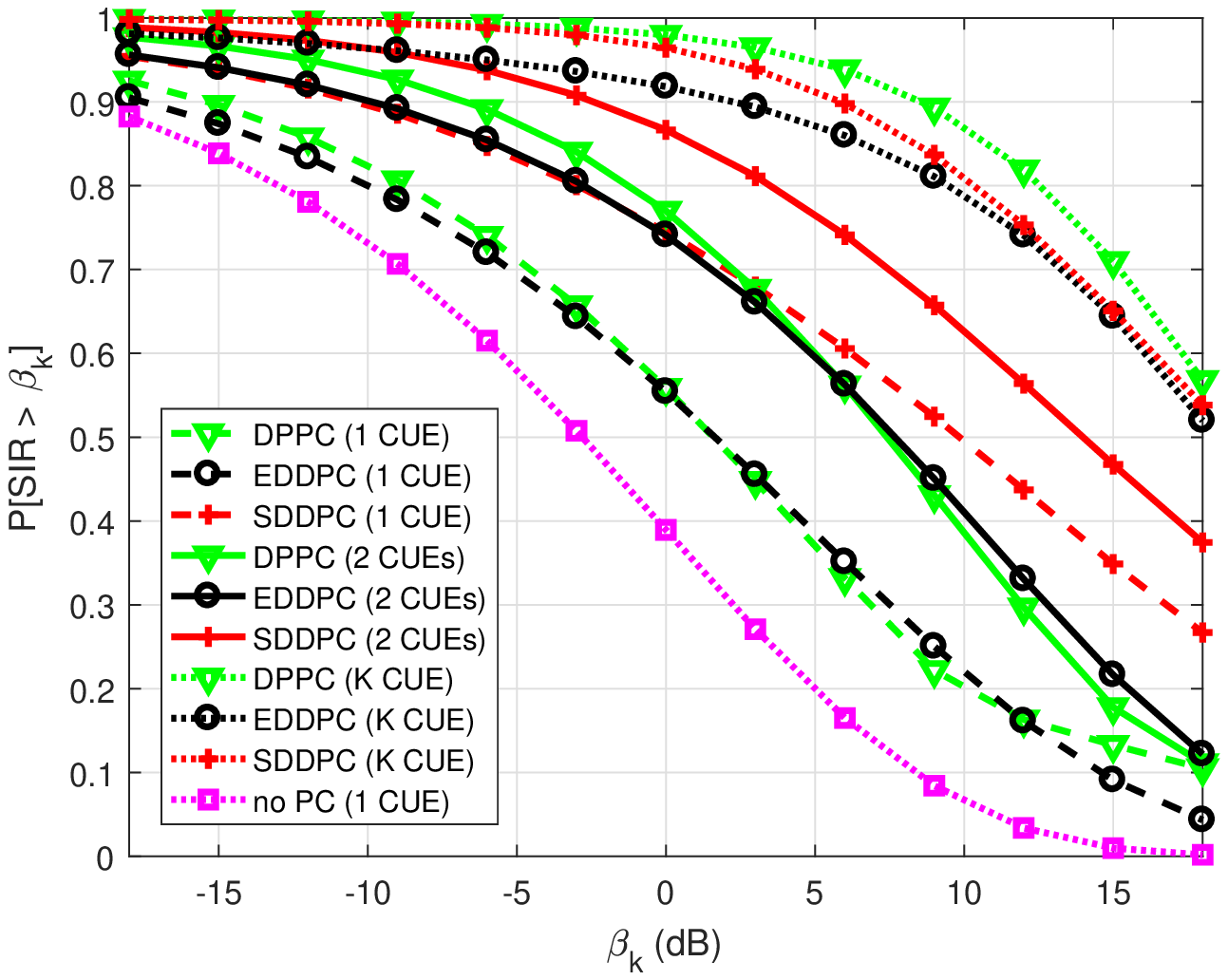}
\label{fig:D2D_dense_ALL_M}}
\caption{Coverage probability for: (a) cellular, and (b) D2D users, for  $M=1, M=2, \text{ and } M=K$.}
\label{fig:MCellfig}
\end{figure*}

\subsection{Coverage Probability with Variable Channel Allocation parameter ($M$)}
Figures~\ref{fig:Cell_dense_ALL_M} and~\ref{fig:D2D_dense_ALL_M} show the coverage probability of cellular  and D2D links using the proposed PC schemes in dense network while varying the channel allocation parameter $M$. Upon increasing $M$, the coverage probability for the D2D and cellular users is enhancing, since a smaller number of D2D users (which share the same resources) will generate interference. Moreover, we have considered the maximum allocation case where $M=K$ in which one cellular uplink will share the resources with only one D2D link and $\expec{K'}=\tfrac{\mathbb{E}[K]}{M}=1$. In this case, the uplink signal will observe interfernce from only the farthest D2D user, and the D2D link will observe the interference from only the farthest cellular uplink user. Thus, the coverage probability for the D2D and the cellular link is greatly enhanced. Furthermore, we compare our results with the case of no power control applied at the D2D links where $p_k=P_{\max,\mathrm{D}}$ and, as expected, the coverage probabilities are drastically affected (decreased by more than $20\%$); since the D2D-interference is overwhelming the receivers (Base station and the D2D receivers). Therefore, our proposed channel allocation and power control schemes are effective interference mitigation methods in order to guarantee the QoS of the cellular uplinks and D2D links.

%
\subsection{Spectral and Power Efficiency}
\begin{figure}[t!]
\centering
    \includegraphics[width=1\textwidth]{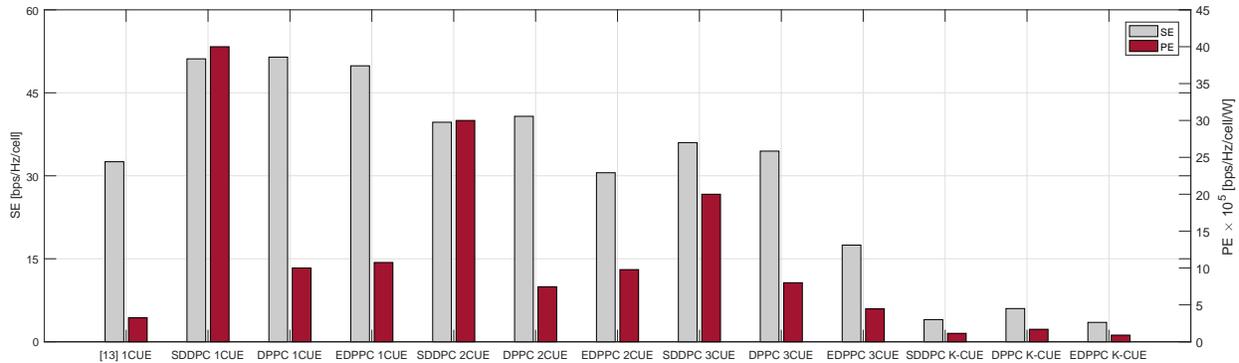}
    \caption{Spectral efficiency (left axis) and power efficiency (right axis) for the proposed PC schemes in a dense network.}
\label{fig:SE_PE_Dense}
\end{figure}

Figure~\ref{fig:SE_PE_Dense} shows the spectral and power efficiency of the D2D and cellular system when applying the proposed PC schemes in a dense deployment scenario, where resources are shared with one, two, and three cellular users. The spectral and power efficiency are defined as follows: 1) Spectral efficiency (SE) is the sum rate $\mathcal{R}_s^{(D)}$ for all D2D links in $\unit[]{bps/Hz/cell}$ as defined in ~(\ref{equ:rate}), and 2) Power efficiency (PE) is given by the ratio of the D2D spectral efficiency achieved over the average transmit power of the D2D links in $\unit[]{bps/Hz/cell/W}$. The figure shows that SDDPC is spectrally and power efficient since more D2D links are able to achieve higher SINR values and less power is allocated for the D2D links. For the case of static PC schemes, DPPC is more spectrally efficient because it maximizes the D2D sum rate. On the other hand EDPPC is more power efficient than DPPC, since less power is allocated. In addition, as expected, when sharing resources with more cellular users, the spectral efficiency of the system decreases (by $15\%$ as shown); however, the coverage probabilities (for cellular and D2D) increase because the interference level is reduced, as shown in Fig.~\ref{fig:globfig2} and Fig.~\ref{fig:MCellfig}.

In addition, when considering $M\!=\!2$ the performance is efficient in the sense that it gives a compromise between coverage, spectral efficiency, and complexity. The eNB performs only one comparison for each active D2D ($d_{k,c_1} \!>\! d_{k,c_2}$). However, when $M$ is further increased, the spectral efficiency for the EDDPC becomes lower than that in~\cite{lee2015power} and the complexity increases.

A trade-off exists between spectral efficiency, power efficiency, and coverage probability. If it is desired for the cellular and D2D link to be of high quality, then CA should be applied with D2D users sharing resources with more than one cellular user. However, if it is required that the D2D communications to be power efficient and not cause too much interference on the uplink, then EDPPC has an advantage over the DPPC. Otherwise, if spectral efficiency is a priority, then DPPC performs best, particularly when D2D users share resources with one cellular user. Finally, SDDPC proves most adequate for link maintenance since it is both spectrally and power efficient, and it maintains the link quality for both D2D and cellular users.

%
\section{Conclusion}\label{sec:conclusion}
In this paper, a random network model for a D2D underlaid cellular system based on stochastic geometry has been proposed. Using this modeling approach, it is possible to derive closed-form analytical expressions for the coverage probabilities and ergodic sum-rates, which give insight into how the various network parameters interact and affect link performance and quality. Unlike previous work, it is shown that a channel allocation scheme that allows D2D links to share resources with more than just one cellular user has merit. New power control schemes targeted for D2D link establishment and link maintenance have been shown to adequately control interference levels under various static and dynamic conditions, using distance-based path-loss parameters (with error margin), varying target SINR, and local CSI. It has been shown through experimental simulations that network performance in terms of coverage probability and spectral efficiency is improved by activating more underlaid D2D links while maintaining the quality of cellular links, and at the same time enhancing power efficiency.

Extending the proposed schemes to include the effects of out-of-cell interference and multiple cell scenarios will be examined in a future work.

%
\section*{Appendix}

%
\subsection{Proof of Lemma~\ref{lemma1}}\label{proof:lemma1}
For the case of two cellular users, we have using~\eqref{equdkc}:
\begin{equation}\label{equthin_lambda}
\begin{split}
\mathbb{P}[Q_k=1]=&\mathbb{P}[\overbrace{d_{k,c_1}}^x \geq \overbrace{d_{k,c_2}}^y] =\iint\limits_{(x,y; x\geq y)} f_{X,Y}(x,y) dx\: dy = \int_{0}^{2R_\text{C}}\!\int_{0}^{x} f_{Y}(y)dy \: f_{X}(x) dx \\
&=\int_{0}^{2R_\text{C}}\!\!\underbrace{\int_{0}^{x}\!{\tfrac{2y}{R_\text{C}^{2}}}\!\left(\!{\tfrac{2}{\pi}}\cos^{-1}\!\left({\tfrac{y}{2R_\text{C}}} \right)\!\!-\!{\tfrac{y}{\pi R_\text{C}}}\!\sqrt{1\!-\!{\tfrac{y^{2}}{4R_\text{C}^{2}}}}\right) dy}_{\mathcal{A}(x)} \: f_{X}(x) dx.
\end{split}
\end{equation}
To solve ~(\ref{equthin_lambda}), integral $\mathcal{A}(x)$ can be directly computed as follows
\begin{align*}
\mathcal{A}(x)=1+ \tfrac{2}{\pi}\left(\tfrac{x^2}{R_\text{C}^2}-1 \right)\cos^{-1}\left( \tfrac{x}{2R_\text{C}}\right) \!\!-\!{\tfrac{x}{\pi R_\text{C}}}\left(1+ \tfrac{x^2}{2R_\text{C}} \right)  \!\sqrt{1\!-\!{\tfrac{x^{2}}{4R_\text{C}^{2}}}}.
\end{align*}
Now using this expression for $\mathcal{A}(x)$, we solve~\eqref{equthin_lambda}
\begin{equation}\label{equthin_lambda2}
\mathbb{P}[d_{k,c_1} \geq d_{k,c_2}] = \int_{0}^{2R_\text{C}} \mathcal{A}(x){\tfrac{2x}{R_\text{C}^{2}}}\!\left(\!{\tfrac{2}{\pi}}\cos^{-1}\!\left({\tfrac{x}{2R_\text{C}}} \right)\!\!-\!{\tfrac{x}{\pi R_\text{C}}}\!\sqrt{1\!-\!{\tfrac{x^{2}}{4R_\text{C}^{2}}}}\right) dx={1\over 2}.
\end{equation}

For the general case of $M$ cellular users, we have:
\begin{align*}
\mathbb{P}[\overbrace{d_{k,c_1}}^{y_1} \geq \max \{ \overbrace{d_{k,c_2}}^{y_2},\cdots ,\overbrace{d_{k,c_M^{}}}^{y_M} \}]
    &=  \!\idotsint\limits_{(y_1,y_i; y_1\geq y_i)}^{M}\!\!\!f_{Y_1,Y_2}(y_1,y_2) \cdots  f_{Y_1,Y_M^{}}(y_1,y_M^{}) dy_1 \dots dy_M^{}\\
    &=  \dfrac{1}{M}
\end{align*}

%
\subsection{Proof of Proposition~\ref{Cell_prob_cov}}\label{proof:Cell_prob}
Using~\eqref{eq:cov_probC_def}, the cellular coverage probability is given by
\begin{align*}
    \covprobCC
        &=\expec{\mathbb{P} \left( {|h_{0,c_m}|^2} \geq \beta_0 d_{0,c_m}^{\alpha} p_0^{-1} \left({\sum_{x_k\in \Phi '}|h_{0,k}|^2d_{0,k}^{-\alpha} p_k + \sigma^2}\right) \right)} \\
        &=\expec{  e^{- \beta_0 d_{0,c_m}^{\alpha} p_0^{-1} \left({\sum_{x_k\in \Phi '}|h_{0,k}|^2d_{0,k}^{-\alpha} p_k + \sigma^2}\right)} } \\
        &=\expec{e^{- \beta_0 \sigma^2 d_{0,c_m}^{\alpha} p_0^{-1} }}\expec{ e^{- \beta_0 d_{0,c_m}^{\alpha} p_0^{-1} \left({\sum_{x_k\in \Phi '}|h_{0,k}|^2d_{0,k}^{-\alpha} p_k}\right)} }
\end{align*}
For the proposed channel allocation scheme, the Laplace transform $\mathcal{L}_{\Phi'}(s)$ is given as
\begin{equation*}
\mathcal{L}_{\Phi '}(s)\triangleq \expec{  e^{- s \left({\sum_{k\in \Phi '}|h_{0,k}|^2d_{0,k}^{-\alpha} p_k}\right)}_{} }=e^{-\tfrac{\pi}{\text{sinc}(2/\alpha)}  \expec{p_k^{2/\alpha}} \mathbb{P}[Q_k=1] \lambda s^{2/\alpha}}
\end{equation*}
where $s = \beta_0 d_{0,c_m}^{\alpha} p_0^{-1}$. The result follows.
Furthermore, it turns out that the Laplace transform is easier than determining the distribution functions, and it completely characterizes the distribution of PPP \cite{baccelli2009stochastic,schilcher2016interference}.

\subsection{Proof of Proposition~\ref{D2D_prob_cov}}\label{proof:D2D_prob}
We first need to derive the expectation of the interference term from other D2D users. Using Slivnyak's theorem~\cite{baccelli2009stochastic} and considering the proposed channel allocation scheme, the reduced PPP excluding the $\nth{k}$ point ($\Phi'\setminus{x_k}$) remains the same as the original PPP $\Phi'$. Hence,
\begin{align*}
&\mathcal{L}_{\Phi'\setminus\{x_k\}} (s)=\expec{e^{-s{\sum_{x_i\in\Phi'\setminus\{x_k\}}p_{i}\vert h_{k,i}\vert^{2}\Vert x_{i}\Vert^{-\alpha}}} \vert k\in\Phi'}=\,{\mathcal L}_{\Phi'}(s)=e^{-{\pi\mathbb{P}[Q_k=1]\lambda\over{\rm sinc}\left({2/\alpha}\right)} \expec{p_{k}^{2/\alpha}}s^{2/\alpha}}.
\end{align*}
Therefore, the coverage probability of the D2D links is given by
\begin{align*}
\mathbb{P}(\text{SINR}_k \geq \beta_k)
    &=\!\mathbb{P}\!\left(\!\vert h_{k,k}\vert^{2}\!\geq\!{\beta_k d_{k,k}^{\alpha}\over p_{k}} \!\left(\!\sum_{x_i\in\Phi'\setminus\{x_k\}}\!{p_{i}\vert h_{k,i}\vert^{2}\over\Vert x_{i}\Vert^{\alpha}}\!+\!{p_{0} \vert h_{k,c_m}\vert^{2}\over d_{k,c_m}^{\alpha}}+\sigma^{2}\right)\!\right) \\
    &=\!\expec{e^{-\beta_k p_{k}^{-1} d_{k,k}^{\alpha}\!\left(\sum_{x_i\in\Phi'\setminus\{x_k\}}\!p_{i}\vert h_{k,i}\vert^{2}\Vert x_{i}\Vert^{-\alpha} + p_{0} \vert h_{k,c_m}\vert^{2}d_{k,c_m}^{-\alpha}+\sigma^{2}\right)}} \\
    &=\mathbb{E}_{Z} \left[e^{-\sigma^{2}\beta p_{k}^{-1}d_{k,k}^{\alpha}}{\cal L}_{\Phi'}\left(\beta_k p_{k}^{-1}d_{k,k}^{\alpha} \right){\cal L}_{Y}\left(\beta_k p_{k}^{-1}d_{k,k}^{\alpha}\right)\right],
\end{align*}
where $Z=d_{k,k}^{\alpha}p_{k}^{-1}$, $Y =|h_{k,c_m}|^2d_{k,c_m}^{-\alpha} p_0$, and ${\mathcal{L}_Y(\beta_k Z)= \mathbb{E}_Y[ e^{-(\beta_k Z)Y}]}$ .

\subsection{Proof of Corollary~\ref{cor:expec_dk}}\label{proof:d_k,cm}
For simplicity, we derive the expressions for $d_{k,c_1}$. The same approach can be used for $d_{k,c_2}$.
We set $D_{c_1}=d_{k,c_1}$ as the distance from any $\nth{k}$ D2D transmitter to the cellular UE $c_1$ such that ${d_{k,c_1} \geq d_{k,c_2}}$; in other words, $D_{c_1}=d_{k,c1} \mathbbm{1}_{\{d_{k,c_1} \geq d_{k,c_2}\}}$, where $\mathbbm{1}$ is the indicator function.

Let $X_1=d_{k,c_1}, X_2=d_{k,c_2}$ with pdfs $f_{X_1}(x)$ and $f_{X_2}(y)$ as given in~\eqref{equdkc}. We can then express the pdf of $D_{c_1}$ as follows:
\begin{align*}
    f_{D_{c_1}}(x)
        &= \int_0^x \tfrac{f_{X_1|X_2}(x|y)\mathbb{P}[X_1 \geq y] f_{X_2}(y)}{\mathbb{P}[X_1 \geq X_2] }dy \\
        &= \int_0^x \tfrac{f_{X_1|X_2}(x|y)\mathbb{P}[X_1 \geq y] {\tfrac{2y}{R_\text{C}^{2}}}\!\left(\!{\tfrac{2}{\pi}}\cos^{-1}\!\left({\tfrac{y}{2R_\text{C}}} \right)\!-\!{\tfrac{y}{\pi R_\text{C}}}\!\sqrt{1\!-\!{\tfrac{y^{2}}{4R_\text{C}^{2}}}}\right)}{\mathbb{P}[\{d_{k,c_1} \geq d_{k,c_2}] }dy\\
        &= \tfrac{1}{\mathbb{P}[\{d_{k,c_1} \geq d_{k,c_2}]}\!\!\left(\!{\tfrac{2x}{R_\text{C}^{2}}}\!\left(\!{\tfrac{2}{\pi}}\!\cos^{-1}\!\left({\tfrac{x}{2R_\text{C}}} \right)\!\!-\!{\tfrac{x}{\pi R_\text{C}}}\!\sqrt{1\!-\!{\tfrac{x^{2}}{4R_\text{C}^{2}}}}\right)\right) \\
        &~~\times  \left(1+ \tfrac{2}{\pi}\left(\tfrac{x^2}{R_\text{C}^2}-1 \right)\cos^{-1}\left( \tfrac{x}{2R_\text{C}}\right) \!\!-\!{\tfrac{x}{\pi R_\text{C}}}\left(1+ \tfrac{x^2}{2R_\text{C}} \right)  \!\sqrt{1\!-\!{\tfrac{x^{2}}{4R_\text{C}^{2}}}}\right).
\end{align*}
The $\nth{n}$ moment of $X_1$ is obtained by computing $\int_0^{2R_\text{C}}x^n f_{D_{c_1}}(x)$, from which we deduce that $\mathbb{E} \left[d_{k,c_1}\right]\approx \tfrac{512R_\text{C}}{45\pi^2}$.

\textit{Remark}: When no resource allocation is applied so that all active D2D users share resources with one CUE, the first moment of the distance between two uniformly distributed points is $\mathbb{E} \left[d_{k,c_1}\right]=128 R_\text{C}/(45\pi)$~\cite{moltchanov2012distance}.

\subsection{Proof of Theorem ~\ref{Theorem_EDPPC}}\label{proof:Therem:EDPPC}
To calculate $\mathbb{E}\left[p_k^{\frac{2}{\alpha}}\right]=\mathbb{E}\left[\text{min}(U^{\frac{2}{\alpha}} d_{k,k}^{2}, V^{\frac{2}{\alpha}}  d_{0,k}^{2})\right]$, we let $A=U^{\frac{2}{\alpha}} d_{k,k}^{2}$ and $B=V^{\frac{2}{\alpha}}  d_{0,k}^{2}$. Using the Jacobian transformation \cite{grimmett2001probability}, we have $f_A(a)=\frac{1}{R_{\mathrm{D}}^2U^{\frac{2}{\alpha}}}$, $F_A(a)=\frac{a}{R_{\mathrm{D}}^2U^{\frac{2}{\alpha}}}$, $f_B(b)=\frac{1}{R_{\mathrm{C}}^2V^{\frac{2}{\alpha}}}$ and $F_B(b)=\frac{b}{R_{\mathrm{C}}^2V^{\frac{2}{\alpha}}}$. Then,
\begin{align*}
    \mathbb{E}\left[p_k^{\frac{2}{\alpha}}\right]
        &= \int_{-\infty}^{\infty}\int_{-\infty}^{\infty} \text{min}(a,b)f_{A}(a)f_{B}(b) da\: db\\
        &=  \int_{-\infty}^{\infty}\!\!\!a f_{A}(a)\left(\int_{a}^{\infty}\!\!f_{B}(b)db\right)da
            +\int_{-\infty}^{\infty}\!\!bf_{B}(b)\left( \int_{b}^{\infty}\!\!f_{A}(a)da\right)db \\
        &=  \int_{-\infty}^{\infty}\!\!\!af_{A}(a)\left(1- F_B(a)\right)da +\int_{-\infty}^{\infty}\!\!\!bf_{B}(b)\left( 1-
            F_A(b)\right)db \\
        &=  \int_{-\infty}^{\infty}\!\!\!af_{A}(a)da +\int_{-\infty}^{\infty}bf_{B}(b)db
            -\left(\int_{-\infty}^{\infty}\!\!\!af_{A}(a)F_B(a)da+\int_{-\infty}^{\infty}bf_{B}(b)F_A(b)db\right)\\
        &=  \expec{A}+\expec{B}-\expec{\max(A,B)}.
\end{align*}

\subsection{Proof of Lemma ~\ref{Lemma_EDPPC}}\label{proof:LemmaEDPPC}
Let $A$ and $B$ be two random variables with pdfs $f_A(a)$ and $f_B(b)$, and cdfs $F_A(a)$ and $F_B(b)$, respectively. Then,
\begin{align*}
    \expec{\max(A,B)}
        &=\int_{-\infty}^{\infty}\int_{-\infty}^{\infty} \text{max}(a,b)f_{A}(a)f_{B}(b) da\: db \\
        &=\int_{-\infty}^{\infty}\!a f_{A}(a)\left(\int_{-\infty}^{a}\!\!f_{B}(b)db\right)da \:+\int_{-\infty}^{\infty}b f_{B}(b)\left( \int_{-\infty}^{b}f_{A}(a)da\right)db
\end{align*}
from which the result follows.

\subsection{Proof of Corollary ~\ref{Corollary:EDPPC}}\label{Proof:COr_EDPPC}
To calculate $\expec{\max(A,B)}$, we consider two cases.

\noindent\textit{Case 1}: If $\tiny R_{\mathrm{D}}^2 U^{2/\alpha} \triangleq a'   >  b' \triangleq R_{\mathrm{C}}^2 V^{2/\alpha}$. Then, applying~\eqref{eq:expec_maxAB}, we have
\begin{align*}
    \expec{\max(A,B)}
        &= \int_{0}^{b'}\!\!\!af_{A}(a)F_B(a)da+\int_{b'}^{ a'}\!\!\!af_{A}(a)\times 1da + \int_{0}^{b'}\!\!\!b f_{B}(b)F_A(b)db\\
        &=\frac{ b'^2}{3a'}+ \frac{a'^2-b'^2}{2a'}+\frac{b'^2}{3a'}=\frac{ b'^2}{6a'}+ \frac{a'}{2} \\
        &=\frac{ R_{\mathrm{C}}^4 V^{4/\alpha}}{6R_{\mathrm{D}}^2 U^{2/\alpha}}+ \frac{R_{\mathrm{D}}^2 U^{2/\alpha}}{2}
\end{align*}

\noindent\textit{Case 2}: If $a' \leq b'$, then following the same approach, we obtain
\begin{equation*}
\expec{\max(A,B)}=\frac{ a'^2}{6b'}+ \frac{b'}{2}=\frac{ R_{\mathrm{D}}^4 U^{4/\alpha}}{6R_{\mathrm{C}}^2 V^{2/\alpha}}+ \frac{R_{\mathrm{C}}^2 V^{2/\alpha}}{2}
\end{equation*}

A similar expression for $\expec{\min(A,B)}$ can be easily derived by applying Theorem~\ref{Theorem_EDPPC}.

%
%

%
\bibliographystyle{IEEEtran}
\bibliography{IEEEabrv,power_control_references}

\begin{thebibliography}{10}
\providecommand{\url}[1]{#1}
\csname url@samestyle\endcsname
\providecommand{\newblock}{\relax}
\providecommand{\bibinfo}[2]{#2}
\providecommand{\BIBentrySTDinterwordspacing}{\spaceskip=0pt\relax}
\providecommand{\BIBentryALTinterwordstretchfactor}{4}
\providecommand{\BIBentryALTinterwordspacing}{\spaceskip=\fontdimen2\font plus
\BIBentryALTinterwordstretchfactor\fontdimen3\font minus
  \fontdimen4\font\relax}
\providecommand{\BIBforeignlanguage}[2]{{%
\expandafter\ifx\csname l@#1\endcsname\relax
\typeout{** WARNING: IEEEtran.bst: No hyphenation pattern has been}%
\typeout{** loaded for the language `#1'. Using the pattern for}%
\typeout{** the default language instead.}%
\else
\language=\csname l@#1\endcsname
\fi
#2}}
\providecommand{\BIBdecl}{\relax}
\BIBdecl

\bibitem{doppler2009device}
K.~Doppler \emph{et~al.}, ``Device-to-device communication as an underlay to
  {LTE}-advanced networks,'' \emph{{IEEE} Commun. Mag.}, vol.~47, no.~12, pp.
  42--49, Dec. 2009.

\bibitem{corson2010toward}
M.~S. Corson \emph{et~al.}, ``Toward proximity-aware internetworking,''
  \emph{{IEEE} Trans. Wireless Commun.}, vol.~17, no.~6, pp. 26--33, Dec. 2010.

\bibitem{fodor2012design}
G.~Fodor \emph{et~al.}, ``Design aspects of network assisted device-to-device
  communications,'' \emph{{IEEE} Commun. Mag.}, vol.~50, no.~3, pp. 170--177,
  Mar. 2012.

\bibitem{3gpp803}
\BIBentryALTinterwordspacing
\emph{{F}easibility study for {P}roximity {S}ervices ({ProSE}) (Release 12)},
  3GPP Std. TR 22.803. [Online]. Available: \url{http://www.3gpp.org}
\BIBentrySTDinterwordspacing

\bibitem{lin2014overview}
X.~Lin, J.~Andrews, A.~Ghosh, and R.~Ratasuk, ``An overview of {3GPP}
  device-to-device proximity services,'' \emph{{IEEE} Commun. Mag.}, vol.~52,
  no.~4, pp. 40--48, May 2014.

\bibitem{mach2015band}
P.~Mach, Z.~Becvar, and T.~Vanek, ``In-band device-to-device communication in
  {OFDMA} cellular networks: {A Survey and Challenges},'' \emph{{IEEE} Commun.
  Surveys Tuts.}, vol.~17, no.~4, pp. 1885--1922, Jun. 2015.

\bibitem{fodor2013comparative}
G.~Fodor \emph{et~al.}, ``A comparative study of power control approaches for
  device-to-device communications,'' in \emph{Proc. IEEE Int. Conf. Commun.
  (ICC)}, Budapest, Hungary, Jun. 2013, pp. 6008--6013.

\bibitem{yates1997soft}
R.~D. Yates, S.~Gupta, C.~Rose, and S.~Sohn, ``Soft dropping power control,''
  in \emph{Proc. IEEE Vehicular Technol. Conf. (VTC)}, Phoenix, AZ, USA,, May
  1997, pp. 1694--1698.

\bibitem{de2015power}
Y.~de~Melo \emph{et~al.}, ``Power control schemes for energy efficiency of
  cellular and device-and-device communications,'' in \emph{Proc. IEEE Wireless
  Commun. and Netw. Conf. (WCNC)}, New Orleans, LA, USA, Mar. 2015, pp.
  1690--1694.

\bibitem{de2014power}
------, ``Power control with variable target {SINR} for {D2D} communications
  underlying cellular networks,'' in \emph{Proc. European Wireless Conf.},
  Barcelona, Spain, May 2014, pp. 1--6.

\bibitem{de2015uplink}
------, ``Uplink power control with variable target {SINR} for {D2D}
  communications underlying cellular networks,'' in \emph{Proc. IEEE Vehicular
  Technol. Conf. (VTC)}, Glasgow, UK, May 2015, pp. 1--5.

\bibitem{ali2015mode}
K.~S. Ali, H.~ElSawy, and M.-S. Alouini, ``On mode selection and power control
  for uplink {D2D} communication in cellular networks,'' in \emph{Proc. IEEE
  Int. Commun. Workshop (ICCW)}, London, UK, Jun. 2015, pp. 620--626.

\bibitem{lee2015power}
N.~Lee, X.~Lin, J.~G. Andrews, and R.~Heath, ``Power control for {D2D}
  underlaid cellular networks: Modeling, algorithms, and analysis,''
  \emph{{IEEE} J. Sel. Areas Commun.}, vol.~33, no.~1, pp. 1--13, Jan. 2015.

\bibitem{memmi2016power}
A.~Memmi, Z.~Rezki, and M.~Alouini, ``Power control for {D2D} underlay cellular
  networks with channel uncertainty,'' \emph{{IEEE} Trans. Wireless Commun.},
  vol.~16, no.~2, pp. 1330--1343, Feb. 2017.

\bibitem{banagar2016power}
M.~Banagar, B.~Maham, P.~Popovski, and F.~Pantisano, ``Power distribution of
  device-to-device communications in underlaid cellular networks,'' \emph{IEEE
  Wireless Commun. Letters}, vol.~5, no.~2, pp. 204--207, Apr. 2016.

\bibitem{islam2016two}
M.~Islam, A.~Taha, S.~Akl, and S.~Choudhury, ``A two-phase auction-based fair
  resource allocation for underlaying {D2D} communications,'' in \emph{Proc.
  IEEE Int. Conf. Commun. (ICC)}, Kuala Lumpur, Malaysia, May 2016, pp. 1--6.

\bibitem{Resource_VTC}
H.~Yousefi, Q.~Rahman, and X.~Wang, ``Delay-tolerant resource allocation for
  {D2D} communication using matching theory,'' in \emph{Proc. IEEE Vehicular
  Technol. Conf. (VTC)}, Toronto, ON, Canada, Sep. 2017, pp. 1--6.

\bibitem{liu2016optimizing}
J.~Liu, J.~Dai, N.~Kato, and N.~Ansari, ``Optimizing uplink resource allocation
  for {D2D} overlaying cellular networks with power control,'' in \emph{Proc.
  IEEE Global Commun. Conf. (GLOBECOM)}, Washington, DC, USA, 2016, pp. 1--6.

\bibitem{liu2016optimizing2}
J.~Liu, J.~Dai, Y.~Kawamoto, and N.~Kato, ``Optimizing channel allocation for
  {D2D} overlaying multi-channel downlink cellular networks,'' in \emph{Proc.
  IEEE Vehicular Technol. Conf. (VTC)}, Montreal, QC, Canada, Sep. 2016, pp.
  1--5.

\bibitem{huang2015mode}
Y.~Huang, A.~A. Nasir, S.~Durrani, and X.~Zhou, ``Mode selection, resource
  allocation, and power control for {D2D}-enabled two-tier cellular network,''
  \emph{{IEEE} Trans. Commun.}, vol.~64, no.~8, pp. 3534--3547, Aug. 2016.

\bibitem{wang2016hypergraph}
L.~Wang \emph{et~al.}, ``Hypergraph-based wireless distributed storage
  optimization for cellular {D2D} underlays,'' \emph{{IEEE} J. Sel. Areas
  Commun.}, vol.~34, no.~10, pp. 2650--2666, Oct. 2016.

\bibitem{tang2016mixed}
H.~Tang and Z.~Ding, ``Mixed mode transmission and resource allocation for
  {D2D} communication,'' \emph{{IEEE} Trans. Wireless Commun.}, vol.~15, no.~1,
  pp. 162--175, Jan. 2016.

\bibitem{wang2016pairing}
L.~Wang and G.~L. Stuber, ``Pairing for resource sharing in cellular
  device-to-device underlays,'' \emph{{IEEE} Netw.}, vol.~30, no.~2, pp.
  122--128, Mar. 2016.

\bibitem{3gpp213}
\BIBentryALTinterwordspacing
\emph{{E}volved {U}niversal {T}errestrial {R}adio {A}ccess ({E-UTRA}); physical
  layer procedures ({Release} 13)}, 3GPP Std. TS 36.213. [Online]. Available:
  \url{http://www.3gpp.org}
\BIBentrySTDinterwordspacing

\bibitem{schilcher2016interference}
U.~Schilcher \emph{et~al.}, ``Interference functionals in {Poisson} networks,''
  \emph{{IEEE} Trans. Inf. Theory}, vol.~62, no.~1, pp. 370--383, Jan. 2016.

\bibitem{stoyan1995stochastic}
D.~Stoyan, W.~Kendall, and J.~Mecke, \emph{Stochastic Geometry and its
  Applications}, 2nd~ed.\hskip 1em plus 0.5em minus 0.4em\relax John Wiley and
  Sons, 1996.

\bibitem{baccelli2009stochastic}
F.~Baccelli and B.~B{\l}aszczyszyn, \emph{Stochastic Geometry and Wireless
  Networks}.\hskip 1em plus 0.5em minus 0.4em\relax NOW: Foundations and Trends
  in Networking, 2010.

\bibitem{zhang2012random}
X.~Zhang and M.~Haenggi, ``Random power control in {P}oisson networks,''
  \emph{{IEEE} Trans. Commun.}, vol.~60, no.~9, pp. 2602--2611, Sep. 2012.

\bibitem{jindal2008fractional}
N.~Jindal, S.~Weber, and J.~G. Andrews, ``Fractional power control for
  decentralized wireless networks,'' \emph{{IEEE} Trans. Wireless Commun.},
  vol.~7, no.~12, pp. 5482--5492, Dec. 2008.

\bibitem{andrews2016primer}
J.~G. Andrews, A.~K. Gupta, and H.~S. Dhillon, ``A primer on cellular network
  analysis using stochastic geometry,'' \emph{arXiv preprint arXiv:1604.03183},
  2016.

\bibitem{moltchanov2012distance}
D.~Moltchanov, ``Distance distributions in random networks,'' \emph{Ad Hoc
  Networks}, vol.~10, no.~6, pp. 1146--1166, 2012.

\bibitem{Globecomm2017_mansour_b}
A.~Abdallah, M.~Mansour, and A.~Chehab, ``A distance-based power control scheme
  for {D2D} communications using stochastic geometry,'' in \emph{Proc. IEEE
  Vehicular Technol. Conf. (VTC)}, Toronto, ON, Canada, Sep. 2017, pp. 1--6.

\bibitem{Wcnc2018_mansour_a}
------, ``Joint channel allocation and power control for {D2D} communications
  using stochastic geometry,'' in \emph{IEEE WCNC. Conf. (WCNC)}, Barcelona,
  Spain, 2018, (to appear).

\bibitem{3gpp331}
\BIBentryALTinterwordspacing
\emph{{E}volved {U}niversal {T}errestrial {R}adio {A}ccess ({E-UTRA}); {R}adio
  {R}esource {C}ontrol ({RRC}) ({Release} 13)}, 3GPP Std. TS 36.331. [Online].
  Available: \url{http://www.3gpp.org}
\BIBentrySTDinterwordspacing

\bibitem{yates1997soft2}
S.~Gupta, R.~D. Yates, and C.~Rose, ``Soft dropping power control-a power
  control backoff strategy,'' in \emph{Proc. IEEE Int. Personal Wireless
  Commun. (PWC)}, Mumbai, India, Dec. 1997, pp. 210--214.

\bibitem{3gpp303}
\BIBentryALTinterwordspacing
\emph{{P}roximity-based services ({ProSe}); Stage 2 ({Release} 13)}, 3GPP Std.
  TS 23.303. [Online]. Available: \url{http://www.3gpp.org}
\BIBentrySTDinterwordspacing

\bibitem{chandrasekharan2015propagation}
S.~Chandrasekharan \emph{et~al.}, ``Propagation measurements for {D2D} in rural
  areas,'' in \emph{Proc. IEEE Int. Commun. Workshop (ICCW)}, London, UK, Jun.
  2015, pp. 639--645.

\bibitem{3gpp032}
\BIBentryALTinterwordspacing
\emph{{U}niversal {G}eographical {A}rea {D}escription ({GAD}) ({Release} 13)},
  3GPP Std. TS 23.032. [Online]. Available: \url{http://www.3gpp.org}
\BIBentrySTDinterwordspacing

\bibitem{grimmett2001probability}
G.~Grimmett and D.~Stirzaker, \emph{Probability and random processes}.\hskip
  1em plus 0.5em minus 0.4em\relax Oxford University Press, 2001.

\end{thebibliography}

\end{document}